\newcommand\inters{\cap}
\newcommand\aconj{\&}
\newcommand\sconj{\cap}
\newcommand\ra{\rightarrow}
\newcommand{\xspace}{ \makebox[1em]{} }
\newcommand\LJ{\textbf{LJ}}
\newcommand\IT{\textbf{IT}}
\newcommand\ITD{\vdash_{\mathrm{IT}}}
\newcommand\IL{\textbf{IL}}
\newcommand\ISL{\textbf{ISL}}
\newcommand\ISLD{\vdash_{\mathbf{ISL}}}
\newcommand\ISC{\textbf{ISC}}
\newcommand\IUSC{\textbf{IUSC}}
\newcommand\ISCD{\vdash_{\mathbf{ISC}}}
\newcommand\molecule[2]{[(#1;#2)\mid i\in I]}
\newcommand\mM{{\mathcal M}}
\newcommand\mN{{\mathcal N}}
\def\ax{\mbox{\bf (Ax)}}
\def\FUS{\mbox{\bf (Fus)}}
\def\Pru{\mbox{\bf (P)}}
\def\cut{\mbox{\bf (cut)}}
\def\weak{\mbox{\bf (W)}}
\def\exch{\mbox{\bf (X)}}
\def\weakex{\mbox{\bf(WX)}}
\def\excon{\mbox{\bf (XC)}}
\def\contr{\mbox{\bf (C)}}
\def\impL{\mbox{\bf  ($\rightarrow$L)}}
\def\impR{\mbox{\bf  ($\rightarrow$R)}}
\def\andL{\mbox{\bf ($\aconj$ L)}}
\def\andR{\mbox{\bf ($\aconj$R)}}
\def\intL{\mbox{\bf ($\sconj$L)}}
\def\intLk{\mbox{\bf ($\sconj$L$_k$)}}
\def\intR{\mbox{\bf ($\sconj$R)}}
\def\intI{\mbox{\bf ($\sconj$I)}}
\def\intE{\mbox{\bf ($\sconj$E)}}
\def\intEk{\mbox{\bf ($\sconj$E$_k$)}}
\def\impE{\mbox{\bf ($\rw$E)}}
\def\impI{\mbox{\bf ($\rw$I)}}
\def\andI{\mbox{\bf ($\aconj$I)}}
\def\andEk{\mbox{\bf ($\aconj$E$_k$)}}
\newenvironment{proof}{\noindent {\bf Proof}\quad}{\nobreak
\quad\nobreak\vrule height5pt width3pt depth2pt}
\newcommand{\gra}{\alpha}
\newcommand{\grb}{\beta}
\newcommand{\grg}{\gamma}
\newcommand{\grtt}{\theta}
\newcommand{\grm}{\mu}
\newcommand{\grn}{\nu}
\newcommand{\grr}{\rho}
\newcommand{\grs}{\ensuremath{\sigma}}
\newcommand{\grt}{\tau}
\newcommand{\GG}{\ensuremath{\Gamma}}
\newcommand{\DD}{\Delta}
\newcommand{\rw}{\rightarrow}
\definecolor{mygreen}{rgb}{0.3,0.9,0.6}
\newtheorem{thm}{\bfseries Theorem}[section]   
\newtheorem{lem}[thm]{\bfseries Lemma}
\newtheorem{dfn}[thm]{\bfseries Definition}
\newtheorem{cor}[thm]{\bfseries Corollary}
\newtheorem{exam}[thm]{\bfseries Example}
\def\LJax{\mbox{\scriptsize\bf (Ax)}}
\def\LJcut{\mbox{\scriptsize\bf (cut)}}
\def\LJweak{\mbox{\scriptsize\bf (W)}}
\def\LJand{\wedge}
\def\LJimpL{\mbox{\scriptsize\bf($\rightarrow$L)}}
\def\LJandL{\mbox{\scriptsize\bf($\LJand$L)}}
\def\LJandR{\mbox{\scriptsize\bf($\LJand$R)}}
\newtheorem{exmplnt}[thm]{\bfseries Example}
\newtheorem{rmrknt}[thm]{\bfseries Remark}
\long\def\hide#1\endhide{}
\title{Intersection Logic in sequent calculus style}
\author{Simona Ronchi Della Rocca
\institute{Dipartimento di Informatica\\
Universita di Torino\\
IT-10149 Torino, Italy}
\email{ronchi@di.unito.it}
\and
Alexis Saurin
\institute{Laboratoire CNRS PPS \& INRIA pi.r2\\
Paris, France}
\email{alexis.saurin@pps.jussieu.fr}
\and
Yiorgos Stavrinos \qquad\qquad\qquad Anastasia Veneti
\institute{Graduate Program in Logic and Algorithms (MPLA)\\
Department of Mathematics\\
University of Athens\\
GR-15784 Zografou, Greece}
\email{g.stavrinos@math.ntua.gr \qquad\qquad\qquad tassiana98@gmail.com}
}
\begin{document}
\maketitle
\section{Introduction}
The intersection type assignment system \IT\ (shown in Figure \ref{ITrules}) is a deductive system that assigns formulae
(built from the intuitionistic implication $\rightarrow$ and the
intersection $\sconj$) as types to the untyped $\lambda$-calculus. It has been defined by Coppo and Dezani \cite{it},
in order to increase the typability power of the simple type assignment system.
In fact, \IT\ has a strong typability power, since it gives types to all the strongly normalizing terms \cite{pott80,krivine}.
Intersection types, supported by a universal type and a suitable
pre-order relation, has been used for describing the $\lambda$-calculus semantics in different domains
(e.g. Scott domains \cite{BCD, CDHL}, coherence spaces \cite{Paolini}), allowing to characterize interesting semantical notions
like solvability and normalizability, both in call-by-name and call-by-value settings \cite{ronchi2}.

Differently from other well known type assignments for $\lambda$-calculus, like for example
the simple or the second order one, \IT\ has not been designed as decoration of a logical system, following the so called
Curry-Howard isomorphism, relating in particular types with logical formulae.
In fact, despite to the fact that it looks like a decoration of the implicative and additive fragment of the intuitionistic logic,
this is not true. Indeed, while the connective $\ra$ has the same behaviour of the intuitionistic implication, the intersection
$\sconj$ does not correspond to the intuitionistic conjunction, as Hindley pointed out in \cite{hin}, since the rule introducing it
has the meta-theoretical condition that the two proofs of the premises be decorated by the same term,
and this constraint is no more explicit when terms are erased.

It is natural to ask if there is a logic with a natural correspondence with intersection types. Some attempts have been made,
which will be briefly recalled in Section \ref{Other}.
Our work follow the line started from ~\cite{il} and~\cite{isl,islrev}, where the authors defined {\em Intersection Logic} $\IL$
and {\em Intersection Synchronous Logic} $\ISL$, respectively.
$\ISL$ is a deductive system in natural deduction style proving {\em molecules} which
are finite multisets of {\em atoms} which, in turn, are pairs of a context and a formula. The notion of molecule allows to distinguish
between two sets of connectives, {\em global} and {\em local}, depending on whether they are introduced and eliminated in parallel on all
atoms of a molecule or not.
Note that global and local connectives are called asynchronous and synchronous, respectively, in~\cite{isl,islrev}.
In particular, two kinds of conjunction are present: the global one corresponding to intuitionistic conjunction and the local one
corresponding to intersection. Then, the intersection type assignment system is obtained by decorating $\ISL$ with $\lambda$-terms
in the Curry-Howard style. Roughly speaking, a molecule can be mapped into a set of intuitionistic proofs which are {\em isomorphic},
in the sense that some rules are applied in parallel to all the elements in the set. The isomorphism is obtained by collapsing the two
conjunctions.

$\ISL$ put into evidence the fact that the usual intuitionistic conjunction can be splitted into two different connectives,
with a different behaviour,
and the intersection corresponds to one of it.
So we think that $\ISL$ is a logic that can be interesting in itself, and so we would like to explore it from a proof-theoretical point of
view. To this aim, we present here a sequent calculus formulation of it, that we call $\ISC$, and we prove that it enjoys the property
of cut elimination. The proof is not trivial, since the play between global and local connectives is complicated and
new structural rules are needed.

Stavrinos and Veneti~\cite{tut,cie} further enhanced $\IL$ and $\ISL$ with a new connective: the logical counterpart of the union
operator on types~\cite{bar}. It turns out that union is a different kind of disjunction. In the extensions of $\IL$ and $\ISL$
presented in~\cite{tut}, though, the status of union is not so clear; in particular, union is not an explicitly local version of intuitionistic
disjunction, since its elimination rule involves a global side-effect. So, the pair union-disjunction does not enjoy the nice characterization
of the pair intersection-conjunction, which are the local and global versions, respectively, of the same connective. Moving to a sequent
calculus exposition \cite{cie} allows for a very nice and symmetric presentation of both intersection and union, which are
now the local versions of intuitionistic conjunction and disjunction, respectively.
Our next step will be to do a proof theoretical investigation of this new calculus.

\medskip
The paper is organized as follows. In Section 2 the system $\ISL$ is recalled, and its sequent calculus version $\ISC$ is presented. In Section 3 the relation between $\ISC$ and Intuitionistic Sequent Calculus $\LJ$ is shown. Section 4 contains a brief survey of the cut-elimination procedure
in $\LJ$. In Section 5 the cut elimination steps of $\ISC$ are defined, and in Section 6 the cut-elimination algorithm is given.
Moreover Section 7 contains a brief survey of the other approaches to the problem of a logical foundation of intersection types.

\begin{figure}\label{ITrules}
\begin{center}
$$ \infer[(A)]{\Gamma, x:\sigma \ITD x:\sigma}{} $$
\\

\begin{tabular}{cc}
$\infer[(\sconj I)]{\Gamma \ITD M: \sigma\sconj \tau} {\Gamma \ITD M:\sigma \quad \Gamma
\ITD M: \tau}  $
\xspace & \xspace
$\infer[(\sconj E_{k}) k=L,R] {\Gamma \ITD
 M: \sigma_{k}} {\Gamma \ITD   M: \sigma_{L}\sconj \sigma_{R}}  $
\\
\\
$ \infer[(\ra I)] {\Gamma \ITD  \lambda x.M: \sigma
\ra \tau}  {\Gamma, x:\sigma  \ITD M: \tau} $ \xspace & \xspace
$\infer[(\ra E)] {\Gamma  \ITD
MN:\tau}{\Gamma \ITD M:\sigma \ra \tau \quad \Gamma \ITD N:\tau}   $
\end{tabular}
\end{center}
\caption{The intersection type assignment system \IT.}
\end{figure}

\section{The system $\ISC$}
In this section the system $\ISL$ presented
in~\cite{isl,islrev} is recalled. Moreover, an equivalent sequent calculus version is given.

\begin{dfn}[ISL]
\begin{itemize}
\item [i)]\emph{Formulas} are generated by the grammar
\[\grs::=a\:|\:\grs\ra\grs\:|\:\grs\aconj\grs\:|\:\grs\sconj\grs\] where
$a$ belongs to a countable set of propositional variables.
Formulas are ranged over by small greek letters.
\item[ii)] \emph{Contexts} are finite sequences of formulas, ranged over by $\,\GG,\DD,E,Z$.
The \emph{cardinality} $|\GG|$ of a context $\GG$ is the
number of formulas in $\GG$.
\item[iii)] \emph{Atoms} $(\GG;\grs)$ are pairs of a context and a formula, ranged over by $\,\mathcal{A,B}$.
\item[iv)] \emph{Molecules}
$\,[(\gra_1^i,\ldots\gra_m^i;\gra_i)\:|\:1\leqslant i\leqslant n]\,$ or
$\,[(\gra_1^i,\ldots\gra_m^i;\gra_i)\:|\:i\in I]\,$ or $\,[(\gra_1^i,\ldots\gra_m^i;\gra_i)]_i$
are finite multisets of atoms such that all atoms have the same context cardinality.
Molecules are ranged over by $\mM, \mN$.
\item[v)] $\ISL$ is a logical system proving molecules in natural deduction
style. Its rules are displayed in Figure~\ref{iund}.
We write $\ISLD\mM$ when there is a deduction proving the molecule $\mM$, and
$\pi:\mM$ when we want to denote a particular derivation $\pi$ proving $\mM$.
\end{itemize}
\end{dfn}


Some comments are in order. Rule $(P)$ is derivable and it is useful for the normalization procedure.
The connectives $\ra,\aconj$ are {\em global}, in the sense that they are both
introduced and eliminated in all the atoms of a molecule by a unique rule, while
the connective $\sconj$ is {\em local}, since it is introduced and eliminated in just one atom of a molecule.

\begin{figure}
\begin{center}
\footnotesize
\begin{tabular}{c}
$\infer[\scriptsize\ax]{\molecule{\grs_{i}}{\grs_{i}}}{}$\qquad
$\infer[\scriptsize\Pru] {\mM} {\mM\cup \mN}$
\\ \\
$\infer[\scriptsize\weak]{\molecule{\Gamma_i,\grs_i}{ \grt_i}}
{\molecule{\Gamma_i}{ \grt_i}}$ \qquad
$\infer[\scriptsize\exch]{\molecule{\Gamma^{i}_{1} ,\grs_i,\grt_i,\Gamma^{i}_{2}}{\grr_i}}
{\molecule{\Gamma^{i}_{1}
,\grt_i,\grs_i,\Gamma^{i}_{2}}{\grr_i}}$
\\ \\
$ \infer[\scriptsize\impI] { \molecule{\Gamma_i}{\grs_i\ra \grt_i} }
{ \molecule{\Gamma_i,\grs_i}{\grt_i}} $
\qquad
$ \infer[\scriptsize\impE] {\molecule{\Gamma_i}{\grt_i} }
{\molecule{\Gamma_i}{\grs_i\ra\grt_i}\quad\molecule{\Gamma_i}{\grs_i} } $
\\ \\
$ \infer[\scriptsize\andI] {\molecule {\Gamma_i}{\grs_i\aconj \grt_i}}
{\molecule {\Gamma_i}{\grs_i}\quad\molecule{\Gamma_i}{\grt_i}} $
\qquad
$\infer[\scriptsize\andEk,\ k=L,R]
{\molecule{\Gamma_i}{\grs_{i}^{k}}}
{\molecule{\Gamma_i}{\grs_{i}^{L}\aconj\grs_{i}^{R}} }$\qquad
\\ \\
$ \infer[\scriptsize\intI] {\mM\cup[(\Gamma;\grs\inters\grt)]}
{\mM\cup[(\Gamma;\grs),(\Gamma;\grt)]} $\qquad
$ \infer[\scriptsize\intEk,\ k=L,R] { \mM\cup[(\Gamma;\grs_{k})]}
{\mM\cup[(\Gamma;\grs_{L}\sconj\grs_{R})]} $
\end{tabular}
\end{center}\smallskip
\normalsize
\caption{The system $\ISL$}\label{iund}
\end{figure}

The following theorem recall the correspondence between $\ISL$ and $\IT$, proved in \cite{islrev}.
In the text of the theorem, we use the notation $(\Gamma)^{s}$ for denoting a decoration of the context
$\Gamma$ through a sequence $s$ of different variables. More precisely, if $\Gamma = \alpha_{1},..., \alpha_{n}$
and $s=x_{1},...,x_{n}$, $(\Gamma)^{s}$ is the set of assignments
$\{x_{1}:\alpha_{1},...,  x_{n}:\alpha_{n}\}$.

\begin{thm}[\ISL\ and \IT]\label{chiso}
\begin{itemize}
\item[i)] Let
$\ISLD [(\Gamma_1;\alpha_1),\ldots,(\Gamma_{m};\alpha_{m})]$, where
$\alpha_{i} $ and all formulae in $\Gamma_{i} $ do not contain the connective $\aconj$.
Then for every $s$, there is $M$ such that $(\Gamma_i)^s \ITD M:\alpha_i.$
\item[ii)]
If
$(\Gamma_i)^{x_1,\ldots,x_n}\ITD M:\alpha_{i}$ ($i \in I$), then
$\ISLD [(\Gamma_{i};\alpha_{i}) \mid i \in I]$.
\end{itemize}
\end{thm}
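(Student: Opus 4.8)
The plan is to prove the two directions separately: part~(i) by induction on the $\ISL$-derivation, and part~(ii) by induction on the structure of the $\lambda$-term $M$ (the shape of the $\IT$-derivations being essentially read off from $M$).

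\noindent\textbf{Part (i).} Since $\aconj$ is absent from the conclusion molecule and $\IT$ has no product, I would first reduce to an $\ISL$-derivation that is $\aconj$-free throughout: by normalization of $\ISL$~\cite{islrev} one may take the derivation normal, and then a subformula-property argument shows $\aconj$ does not occur in it at all. One then proceeds by induction on such a derivation. The global rules translate pointwise: for \impI\ with premise $[(\GG_i,\grs_i;\grt_i)\mid i]$, extend $s$ by a fresh $y$, obtain by the induction hypothesis a \emph{single} $M$ with $(\GG_i)^s,y{:}\grs_i\ITD M{:}\grt_i$ for all $i$, and return $\lambda y.M$; \impE\ is dual, via $\IT$'s application rule; \weak\ and \exch\ become admissible $\IT$-weakening and a trivial reordering of a context-set; \Pru\ drops atoms while keeping $M$; and the base case $[(\grs_i;\grs_i)\mid i]$ returns the single decoration variable of $s$. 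The one case that really uses the design of $\ISL$ is \intI: its premise molecule $\mM\cup[(\GG;\grs),(\GG;\grt)]$ carries \emph{both} conjuncts as atoms, so the induction hypothesis supplies one and the same $M$ with $(\GG)^s\ITD M{:}\grs$ and $(\GG)^s\ITD M{:}\grt$; now $\IT$'s \intI\ — which \emph{does} require its two premises to carry equal terms — applies and yields $(\GG)^s\ITD M{:}\grs\sconj\grt$, and \intEk\ is immediate.

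\noindent\textbf{Part (ii).} Here I would argue by induction on $M$. In each case I first peel off, from the root of every $\IT$-derivation $d_i:(\GG_i)^{\vec x}\ITD M{:}\gra_i$, the \intI- and \intEk-rules lying below the head rule for $M$ (possible because these rules leave the term unchanged), so that $d_i$ becomes a tree of $\sconj$-rules with head-rule-topped leaves, and hence each $\gra_i$ is built by pure $\sconj$-manipulation from a family of base types. If $M=x_j$ the base types all equal the declared type $(\GG_i)_j$: start from an $\ISL$-axiom $[(\grs;\grs)\mid\ldots]$ whose index multiset holds enough copies of each $(\GG_i)_j$ to feed the \intI-steps, imitate the $\sconj$-manipulations atom by atom with the local rules \intI/\intEk\ (the other atoms being untouched), discard auxiliary atoms by \Pru, and finally use \weak\ and \exch\ — which, being global, can still add and permute different formulas in different atoms — to grow the singleton contexts into the $\GG_i$. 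If $M=\lambda y.N$ the base types are arrows $\grs^i_k\ra\grt^i_k$, each carrying a sub-derivation of $(\GG_i)^{\vec x},y{:}\grs^i_k\ITD N{:}\grt^i_k$; collect these into one family, apply the induction hypothesis to $N$ to obtain $\ISLD[(\GG_i,\grs^i_k;\grt^i_k)\mid i,k]$, apply $\ISL$'s \impI, then recombine by \intI/\intEk\ and \Pru. If $M=N_1N_2$ the base types $\grt^i_k$ come from pairs $(\GG_i)^{\vec x}\ITD N_1{:}\grs^i_k\ra\grt^i_k$ and $(\GG_i)^{\vec x}\ITD N_2{:}\grs^i_k$; apply the induction hypothesis to $N_1$ and $N_2$ over the common index family, apply $\ISL$'s \impE\ (its two premise molecules now matching atom by atom, same contexts and same $\grs^i_k$), and recombine as before.

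I expect the main obstacle to be the peeling/generation step in part~(ii): $\IT$-derivations are not syntax-directed, since \intI\ and \intEk\ may be inserted anywhere without changing the term, so one must first show that any $\IT$-derivation of $M{:}\gra$ can be reorganized with all $\sconj$-rules strictly above the head rule for $M$; then the bookkeeping has to be kept honest — multiplicities in the starting axiom, the global \weak/\exch\ needed to rebuild the distinct contexts $\GG_i$, the uses of \Pru\ to erase auxiliary atoms — so that a \emph{single} molecule $[(\GG_i;\gra_i)\mid i\in I]$ results. The analogous, milder point in part~(i) is the preliminary removal of $\aconj$.
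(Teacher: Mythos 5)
The paper does not actually prove Theorem~\ref{chiso}: it is recalled from \cite{islrev}, so there is no in-text proof to measure your attempt against. On its own merits your reconstruction is essentially correct and follows the route one would expect: part (i) by induction on the $\ISL$ derivation, with the key observation that the premise of \intI\ already contains both conjuncts as atoms over the same context, so the single term produced by the induction hypothesis satisfies the equal-subject side condition of $\IT$'s intersection introduction; part (ii) by induction on $M$, synchronising the head rules across the family of $\IT$ derivations and simulating each derivation's $\sconj$-bookkeeping with the local rules, the global structural rules rebuilding the contexts in parallel. Two small corrections. The ``reorganisation'' you single out as the main obstacle in (ii) is automatic: \intI\ and \intEk\ are the only $\IT$ rules that leave the subject unchanged, so the maximal block of $\sconj$-rules at the root of each derivation is already a tree whose leaves end in the head rule for $M$; no permutation lemma is needed (and your later sentence placing the $\sconj$-rules ``above the head rule'' inverts your earlier, correct description). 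Conversely, the step you call ``milder'' in part (i) --- purging $\aconj$ --- is the one place that genuinely leans on external machinery (normalization of $\ISL$ and the subformula property of normal derivations, from \cite{isl,islrev}) and should be flagged as an explicit appeal to those results, since without it the \andI\ case has no $\IT$ counterpart. A cosmetic point: in the \exch\ case of (i) the decoration $(\Gamma_i)^s$ assigns permuted types to the same variables, so the witnessing term is the old one up to a variable renaming rather than literally the same term.
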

Now we will define an equivalent formulation in sequent calculus style.
\begin{dfn}
$\ISC$ is a logical system proving molecules in sequent calculus
style. Its rules are displayed in Figure~\ref{IUSCrules}.
We write $\ISCD\mM$ when there is a derivation proving molecule $\mM$ and
$\pi:\mM$ when we want to denote a particular derivation $\pi$ proving $\mM$.
\end{dfn}

\begin{figure}
\footnotesize

{\bf Identity Rules:} both {\it global}\medskip

$$
\infer[\scriptsize\mbox{\bf (Ax)}]
{[ (\alpha_i ; \alpha_i )| i \in I ]}
{}
\qquad
\infer[\scriptsize\mbox{\bf (cut)}]
{[(\Gamma_i, \Delta_i; \beta_i) |  i \in I ]}
{[ (\Gamma_i; \alpha_i) | i \in I] \quad
[ (\Delta_i, \alpha_i; \beta_i) | i \in I]}
$$\smallskip

{\bf Structural Rules}: {\bf (W)}, {\bf (X)}, {\bf (C)} are {\it global} and  \FUS, {\bf (P)}
are {\it local}\medskip
$$
\infer[\scriptsize\mbox{\bf (W)}]
{[(\Gamma_i , \alpha_i ; \beta_i) | i  \in I]}
{[ (\Gamma_i ; \beta_i) | i \in I]}
\qquad
\infer[\scriptsize\mbox{\bf (X)}]
{[(\Gamma_i, \alpha_i , \beta_i ,  \Delta_i ; \gamma_i )| i \in I]}
{[(\Gamma_i , \beta_i,\alpha_i,\Delta_i ; \gamma_i )| i \in I]
}
\qquad
\infer[\scriptsize\mbox{\bf (C)}]
{[(\Gamma_i , \alpha_i ; \beta_i) | i  \in I]}
{[(\Gamma_i , \alpha_i , \alpha_i ; \beta_i) | i \in I]}
$$

$$\infer[\scriptsize\FUS]{[(\Gamma; \beta),(\Gamma; \beta)]\cup \mM}{[(\Gamma; \beta)]\cup \mM}
\qquad
\infer[\scriptsize\mbox{\bf (P)}] {\mM} {\mM\cup \mN}
$$\smallskip

{\bf Logical Rules:} {\bf $\ra$}, {\bf $\aconj$} are {\it global} and {\bf $\sconj$} is {\it local}\medskip

$$
\infer[\scriptsize\mbox{\bf  ($\rightarrow$L)}]
{[(\Gamma_i,\Delta_i,\alpha_i \rightarrow \beta_i ;  \gamma_i )| i \in I]}
{[(\Gamma_i  ; \alpha_i )| i \in I ] \quad
  [(\Delta_i,\beta_i ; \gamma_i) | i \in I ]}
\qquad
\infer[\scriptsize\mbox{\bf ($\rightarrow$R)}]
{[(\Gamma_i ; \alpha_i  \rightarrow \beta_i )| i \in I]}
{[ (\Gamma_i , \alpha_i ; \beta_i )| i  \in I ]}
$$

$$
\infer[\scriptsize\mbox{\bf ($\aconj$ L)}]
{[(\Gamma_i,\alpha^L_i \aconj \alpha^R_i ; \beta_i) | i \in I]}
{[ (\Gamma_i,\alpha^L_i,\alpha^R_i ; \beta_i) | i \in  I ]}
\qquad
\infer[\scriptsize\mbox{\bf ($\aconj$ R)}]
{[(\Gamma_i ,\Delta_i;  \alpha_i \aconj \beta_i) | i \in I]}
{[ (\Gamma_i; \alpha_i) | i \in I ]
\quad [ (\Delta_i; \beta_i )| i \in I]
}
$$

$$
\infer[\scriptsize\mbox{\bf($\sconj$ L$_k$)},\ k=L,R]
{[(\Gamma, \alpha_L \sconj \alpha_R ; \beta)] \cup  \mM}
{[(\Gamma, \alpha_k ; \beta)] \cup \mM}
\qquad
\infer[\scriptsize\mbox{\bf ($\sconj$ R)}]
{[(\Gamma ; \alpha \cap \beta)]  \cup \mM}
{[(\Gamma ; \alpha),(\Gamma ; \beta)] \cup \mM}
$$

\smallskip
\caption{The system $\ISC$}\label{IUSCrules}
\normalsize
\end{figure}

Notice that the connectives $\ra, \aconj$ are dealt with rules having a multiplicative behaviour of contexts, but the structural
rules ensure an equivalent additive presentation. The connective $\sconj$, though, need to have an additive presentation.
By abuse of notation, we will call global the rules dealing with $\ra$ and $\aconj$ and local the rules dealing with
$\sconj$.
The rules $(\scriptsize \bf P)$ and $\FUS$ are derivable; they will be useful in the cut-elimination procedure.

\begin{lem}\label{clean-lem}
Let $\pi : \ISCD \mM$. There is a function {\em clean}, such that $clean(\pi)$ is a derivation proving $\mM$
which does not contain applications of the rules $(\scriptsize \bf P)$ and $\FUS$.
\end{lem}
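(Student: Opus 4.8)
The plan is to reduce Lemma~\ref{clean-lem} to a single admissibility statement about derivations that already avoid $\Pru$ and $\FUS$. The key observation is that the two rules to be removed have a common shape: an application of $\Pru$ passes from $\mM\cup\mN$ to $\mM$, and an application of $\FUS$ passes from $[(\Gamma;\beta)]\cup\mM$ to $[(\Gamma;\beta),(\Gamma;\beta)]\cup\mM$; in either case the set of atoms occurring in the molecule can never grow as we pass from premise to conclusion, only the multiplicities changing. So it suffices to establish the following auxiliary statement: \emph{if a molecule $\mM$ has a derivation avoiding $\Pru$ and $\FUS$, and $\mM'$ is any molecule every atom of which occurs in $\mM$, then $\mM'$ also has a derivation avoiding $\Pru$ and $\FUS$.} A single statement of this form subsumes both the ``projecting'' behaviour of $\Pru$ (which drops atoms) and the ``duplicating'' behaviour of $\FUS$ (which raises a multiplicity).

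I would prove the auxiliary statement by structural induction on the given derivation of $\mM$, with a case analysis on its last rule $r$, which by hypothesis is neither $\Pru$ nor $\FUS$. If $r$ is $\ax$, then $\mM = [(\alpha_i;\alpha_i)\mid i\in I]$, so every atom of $\mM'$ has the form $(\alpha;\alpha)$ and $\mM'$ is itself an instance of $\ax$. If $r$ is one of the global rules — $\weak$, $\exch$, $\contr$, $\impR$, $\andL$ with one premise, or $\cut$, $\impL$, $\andR$ with two — then each atom of $\mM'$ equals some atom of the conclusion of $r$ and hence can be traced back to an atom of the premise (resp. to a matched pair of atoms, one in each premise); replacing in $\mM'$ every atom by its ancestor(s) produces a molecule (resp. two molecules sharing a common index multiset) all of whose atoms occur in the corresponding premise molecule, so the induction hypothesis applies to the immediate subderivation(s), and re-applying $r$ to the clean derivation(s) so obtained yields a clean derivation of $\mM'$. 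Finally, if $r$ is a local rule $\intR$ or $\intLk$ with active atom $c$ — so that the premise is obtained by replacing $c$ by the two atoms $(\Gamma;\alpha),(\Gamma;\beta)$ for $\intR$, or by the single atom $(\Gamma,\alpha_k;\beta)$ for $\intLk$ — there are two subcases: if $c$ does not occur in $\mM'$, then every atom of $\mM'$ already occurs in the premise molecule and the induction hypothesis applies directly; otherwise, writing $m\ge 1$ for the multiplicity of $c$ in $\mM'$, apply the induction hypothesis to the immediate subderivation with target the molecule obtained from $\mM'$ by replacing the $m$ occurrences of $c$ with $m$ copies of the corresponding premise atom(s) — whose atoms all occur in the premise molecule — and then apply $r$ exactly $m$ times to recover a clean derivation of $\mM'$.

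Lemma~\ref{clean-lem} then follows by a straightforward induction on $\pi$, which simultaneously defines the function \emph{clean}: if $\pi$ ends in a rule other than $\Pru$ and $\FUS$, let $clean(\pi)$ be that rule applied to the results of \emph{clean} on the immediate subderivation(s); if $\pi$ ends in $\Pru$ with premise a derivation $\pi'$ of $\mM\cup\mN$ and conclusion $\mM$, then $clean(\pi')$ is a derivation of $\mM\cup\mN$ avoiding $\Pru$ and $\FUS$, and the auxiliary statement applied with target $\mM$ produces the required clean derivation; the case where $\pi$ ends in $\FUS$ is identical, with target $[(\Gamma;\beta),(\Gamma;\beta)]\cup\mM$. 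I expect the main obstacle to be twofold: first, finding the right generality for the auxiliary statement — phrasing it via occurrence of atoms rather than sub- or super-multiset inclusion, so that a single plain structural induction covers $\Pru$ and $\FUS$ at once; and second, the local-rule case, where one must keep track of the multiplicity of the active atom in the target molecule and reinstate $\intR$ or $\intLk$ the appropriate number of times. The residual multiset and index bookkeeping for the binary global rules — keeping the two projected premises matched, and choosing an ancestor when several premise atoms project onto the same conclusion atom — is routine.
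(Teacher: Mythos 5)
Your proof is correct, but it follows a genuinely different route from the paper's. The paper eliminates $\Pru$ and $\FUS$ by local rewriting: each occurrence is permuted upward past the rule immediately above it (possibly multiplying, as in the paper's own example where a $\FUS$ pushed through $\intR$ becomes two $\FUS$'s) until it reaches an axiom, where it vanishes. You instead isolate a single admissibility lemma --- if $\mM$ has a derivation avoiding $\Pru$ and $\FUS$ and every atom of $\mM'$ occurs in $\mM$, then so does $\mM'$ --- proved by structural induction on the clean derivation of $\mM$, and then discharge each $\Pru$ or $\FUS$ in one shot. The underlying case analysis is essentially the same set of commutations, but your packaging buys two things: the condition ``every atom of $\mM'$ occurs in $\mM$'' uniformly subsumes the atom-dropping of $\Pru$ and the multiplicity-raising of $\FUS$, and termination is immediate from the structural induction, whereas the paper's rewriting sketch leaves implicit why the process terminates even though $\FUS$ duplicates when it crosses $\intR$. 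What the paper's operational formulation buys in return is an explicit step-by-step description of $clean$ that is reused downstream (Definition~3.1 sets $\tilde{\pi}=\tilde{clean(\pi)}$ for $\Pru$ and $\FUS$, and Lemma~6.2 needs that cleaning preserves the heights of the associated $\LJ$ derivations); if you adopt your version you should note that your construction preserves each atom's ``thread'' of rules, so the projected $\LJ$ derivations and their heights are unchanged, which keeps those later uses intact.
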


\begin{proof}
It is easy to check that both $(\scriptsize \bf P)$ and $\FUS$ commute upwards with any rule preceding them and that they disappear
when following an axiom rule. For instance:
$$\infer[\scriptsize\mbox{\bf (Fus)}]{[(\Gamma; \alpha \cap \beta),(\Gamma;\alpha \cap \beta)]\cup \mM}
{\infer[\scriptsize\mbox{\bf ($\sconj$ R)}]
{[(\Gamma ; \alpha \cap \beta)]  \cup \mM}
{\infer{[(\Gamma ; \alpha),(\Gamma ; \beta)] \cup \mM}{D}}}
\qquad\longrightarrow\qquad
\infer[\scriptsize\mbox{\bf ($\sconj$ R)}]{[(\Gamma; \alpha \cap \beta),(\Gamma;\alpha \cap \beta)]\cup \mM}
{\infer[\scriptsize\mbox{\bf ($\sconj$ R)}]
{[(\Gamma; \alpha), (\Gamma; \beta), (\Gamma ; \alpha \cap \beta)]  \cup \mM}
{\infer[\scriptsize\mbox{\bf (Fus)}]
{[(\Gamma ; \alpha),(\Gamma ; \beta), (\Gamma ; \alpha),(\Gamma ; \beta)] \cup \mM}
{\infer[\scriptsize\mbox{\bf (Fus)}]
{[(\Gamma ; \alpha),(\Gamma ; \beta),(\Gamma ; \beta)] \cup \mM}
{\infer{[(\Gamma ; \alpha),(\Gamma ; \beta)] \cup \mM}{D}}}}}
$$\smallskip
$$
\infer[\scriptsize\mbox{\bf (Fus)}]{[(\beta; \beta),(\beta;\beta)]\cup[ (\alpha_i ; \alpha_i )| i \in I ]}{\infer[\scriptsize\mbox{\bf (Ax)}]
{[(\beta; \beta)]\cup[ (\alpha_i ; \alpha_i )| i \in I ]}
{}}\qquad\longrightarrow\qquad
\infer[\scriptsize\mbox{\bf (Ax)}]
{[(\beta; \beta),(\beta; \beta)]\cup[ (\alpha_i ; \alpha_i )| i \in I ]}
{}
$$\medskip
and
$$
\infer[\scriptsize\mbox{\bf (P)}]{[ (\alpha_i ; \alpha_i )| i \in I ]}{\infer[\scriptsize\mbox{\bf (Ax)}]
{[ (\alpha_i ; \alpha_i )| i \in I ]\cup[ (\alpha_i ; \alpha_i )| i \in J ]}
{}}\qquad\longrightarrow\qquad
\infer[\scriptsize\mbox{\bf (Ax)}]
{[ (\alpha_i ; \alpha_i )| i \in I ]}
{}
$$\end{proof}

We will call {\em clean} a derivation without applications of rules $(\scriptsize \bf P)$ and $\FUS $.
\begin{thm}
$\:\ISLD \mM$ if and only if $\ \ISCD \mM$
\end{thm}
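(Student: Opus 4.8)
The plan is to prove the two implications by induction on derivations, after establishing three auxiliary facts about $\ISL$ alone. (a)~Every $\ISL$-derivation can be rewritten into one without any instance of $\Pru$: $\Pru$ permutes upward through every rule and, when it erases the atom introduced by an $\intI$ or consumed by an $\intEk$, it absorbs that rule, while over an axiom it yields a smaller axiom. (b)~$\ISL$ admits contraction: if $\ISLD[(\Gamma_i,\alpha_i,\alpha_i;\beta_i)\mid i\in I]$ then $\ISLD[(\Gamma_i,\alpha_i;\beta_i)\mid i\in I]$. (c)~$\ISL$ admits a natural-deduction cut (hypothesis substitution): from $\ISLD[(\Gamma_i;\alpha_i)\mid i\in I]$ and an $\ISL$-derivation in which each atom carries a designated hypothesis occurrence $\alpha_i$, one obtains a derivation of the same molecule with that occurrence replaced by the context $\Gamma_i$ (up to $\exch$). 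Facts (b) and (c) are proved by induction on a $\Pru$-free derivation supplied by~(a); one generalizes the statement to act on an arbitrary sub-multiset of the atoms, so that the local rules $\intI,\intEk$ can be crossed even when they act on a designated atom, and one tracks the designated formula across $\weak$ and $\exch$. I expect~(c) to be the main obstacle: keeping the cut formula in a controlled position under $\weak$ and $\exch$ and getting the commutations with the two local $\cap$-rules right, all while preserving the invariant that every atom of a molecule has the same context cardinality; its base case (an axiom on the substituted side) is discharged by $\exch$.

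For $\ISLD\mM\Rightarrow\ISCD\mM$, I would induct on the $\ISL$-derivation. The rules $\ax,\weak,\exch,\Pru,\impI$ map to their $\ISC$ namesakes $\ax,\weak,\exch,\Pru,\impR$, and $\intI$ to $\intR$. Since the $\ISC$ rules for $\ra$ and $\aconj$ are multiplicative while the $\ISL$ ones are additive, $\andI$ becomes $\andR$ followed by enough instances of $\contr$. The elimination rules become cuts: for $\impE$, from the $\ISC$-translations of $[(\Gamma_i;\sigma_i\ra\tau_i)\mid i\in I]$ and $[(\Gamma_i;\sigma_i)\mid i\in I]$, apply $\impL$ with right premise the axiom $[(\tau_i;\tau_i)\mid i\in I]$, then $\cut$ on $\sigma_i\ra\tau_i$, then $\contr$; for $\andEk$, $\cut$ against $[(\sigma^L_i\aconj\sigma^R_i;\sigma^k_i)\mid i\in I]$, obtained from $\ax$ by $\weak$, $\exch$ and $\andL$. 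The one genuinely molecular case is $\intEk$: there the conclusion is $\mN\cup[(\Gamma;\sigma_k)]$ with $\mN=[(\Gamma_j;\gamma_j)\mid j\in J]$, and by induction $\ISCD\mN\cup[(\Gamma;\sigma_L\cap\sigma_R)]$; apply the global $\cut$ against $[(\gamma_j;\gamma_j)\mid j\in J]\cup[(\sigma_L\cap\sigma_R;\sigma_k)]$, which is derived from $\ax$ by a single $\intLk$ --- the cut leaves $\mN$ untouched and rewrites the remaining atom into $(\Gamma;\sigma_k)$.

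For $\ISCD\mM\Rightarrow\ISLD\mM$, I would first use Lemma~\ref{clean-lem} to replace the derivation by a clean one (no $\Pru$, no $\FUS$) and then induct on it. $\ax,\weak,\exch,\impR,\intR$ map to $\ax,\weak,\exch,\impI,\intI$; $\contr$ and $\cut$ are facts~(b) and~(c); $\andR$ is $\andI$ after weakening both premises to the common context $\Gamma_i,\Delta_i$ (reordering with $\exch$); $\impL$ is obtained by weakening the first premise with $\alpha_i\ra\beta_i$, applying $\impE$, composing with the second premise by~(c), and reordering with $\exch$; $\andL$ is obtained by eliminating the hypotheses $\alpha^L_i$ and $\alpha^R_i$ through two applications of~(c) against $[(\alpha^L_i\aconj\alpha^R_i;\alpha^L_i)\mid i\in I]$ and $[(\alpha^L_i\aconj\alpha^R_i;\alpha^R_i)\mid i\in I]$ (each an axiom followed by $\andEL$, resp.\ $\andER$), then contracting by~(b). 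Finally $\intLk$, once more handled globally: given by induction $\ISLD[(\Gamma,\alpha_k;\beta)]\cup\mN$ with $\mN=[(\Delta_j;\gamma_j)\mid j\in J]$, substitute by~(c), for the designated occurrences, the molecule $[(\rho_j;\rho_j)\mid j\in J]\cup[(\alpha_L\cap\alpha_R;\alpha_k)]$, where $\rho_j$ is the last formula of $\Delta_j$ (well defined since all atoms share the same, positive, context cardinality) and the last atom is $\ax$ followed by one $\intEk$; this replaces $\alpha_k$ by $\alpha_L\cap\alpha_R$ in the distinguished atom while leaving $\mN$ fixed, which is exactly the effect of $\intLk$.
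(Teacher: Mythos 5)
Both directions of your plan are sound in outline, and your \emph{(only if)} direction is essentially the paper's: each elimination rule is simulated by a cut against a molecule built from an axiom (for \impE, a cut against \impL\ applied to an axiom followed by contractions; for \intEk, a global cut whose right premise is an axiom followed by a single \intLk, the untouched atoms being carried along by identity cut formulas $\gamma_j$). Where you genuinely diverge is the \emph{(if)} direction. The paper proves no admissibility lemmas for $\ISL$ at all: it simulates \cut, \impL\ and \intLk\ directly by an implication encoding --- weaken the premises to a common context, abstract the cut formula (or the designated hypothesis) with \impI, and discharge it with \impE; in the \intLk\ case it abstracts the last hypothesis $\theta_i$ of every other atom together with $\alpha_k$ of the distinguished atom, and builds the minor premise of \impE\ from an axiom followed by \intEk. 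That is the same ``keep the other atoms fixed by identities'' idea as your fact (c), but realized in three rule applications rather than by an induction. Your modularization through (b) and (c) is cleaner to state and reuse, and unlike the paper you explicitly cover the \contr\ and \andL\ cases; but it fronts the whole cost onto proving (c).

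And there the plan as written has a real pitfall. You propose to prove (c) by induction on a $\Pru$-free derivation, tracking one designated occurrence per atom. The molecule invariant --- all atoms of a molecule have the same context cardinality --- forces \emph{every} atom to carry exactly one designated occurrence at every stage of that induction, since replacing a single formula by a length-$k$ context in only some atoms breaks the invariant. But the designated occurrences need not have a uniform history across atoms: \weak\ creates the last hypothesis of every atom simultaneously, while \exch\ allows different atoms to permute different positions (the decompositions $\Gamma^i_1,\sigma_i,\tau_i,\Gamma^i_2$ are per-atom), so when you push the substitution past a \weak\ the designated occurrence may be the freshly weakened formula in atom $i$ but an older hypothesis in atom $j$; atom $i$ wants to absorb the substitution into weakenings while atom $j$ wants to propagate it upward, and you cannot do both without losing the cardinality invariant. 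A further duplication issue arises at \intI, where one conclusion atom splits into two premise atoms and the single substituting atom $(\Gamma_{i_0};\alpha_{i_0})$ must be doubled, requiring a \FUS-admissibility for $\ISL$ you have not stated. Both obstacles vanish if you prove (b) and (c) by the paper's encoding instead of by induction: from $\ISLD[(\Gamma_i;\alpha_i)]_i$ and $\ISLD[(\Delta_i,\alpha_i;\beta_i)]_i$ derive $[(\Delta_i,\Gamma_i;\alpha_i\ra\beta_i)]_i$ and $[(\Delta_i,\Gamma_i;\alpha_i)]_i$ by \weak, \exch\ and \impI, then conclude by \impE\ (and similarly for contraction, abstracting $\alpha_i$ twice). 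With (b) and (c) obtained this way, the rest of your argument goes through as you describe.
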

\begin{proof}

{\em \bf (only if)}.
 By induction on the natural deduction style derivation.
Rules {\small\ax,\,\weak,\,\exch}\  are the same in both styles. Rule $(\scriptsize \bf P)$ is derivable in both styles
(see previous lemma and Theorem 11 in \cite{isl}). In case of the global rules, the proof is quite similar to the standard
proof of the equivalence between natural deduction and sequent calculus for intuitionistic case, just
putting similar cases in parallel. So we will show just the case of implication as example, and then the case for local conjunction elimination
(the introduction is the same in both the systems).

Case \impE:\smallskip

{\small
\begin{center}
\AXC{$[(\GG_i\,;\,\gra_i\rw\grb_i)]_i$}
\AXC{$[(\GG_i\,;\,\gra_i)]_i$}
\AXC{$\ \ $}\RL{\scriptsize\ax}
\UIC{$[(\grb_i\,;\,\grb_i)]_i$}\RL{\scriptsize\impL}
\BIC{$[(\GG_i,\gra_i\rw\grb_i\,;\,\grb_i)]_i$}\RL{\scriptsize\cut}
\BIC{$[(\GG_i,\GG_i\,;\,\grb_i)]_i$}\RL{\scriptsize\excon}\dashedLine
\UIC{$[(\GG_i\,;\,\grb_i)]_i$}\DP
\end{center}}\medskip

where the dashed line named $(\bf XC)$ denotes a sequence of applications of rules ($\scriptsize \bf X$) and ($\scriptsize \bf C$).

Case \intEk:\smallskip

{\small
\begin{center}
\AXC{$[(\GG_i\,;\,\grg_i)]_i,(\GG\,;\,\gra_L\cap\gra_R)$}
\AXC{$\ \ $}\RL{\scriptsize\ax}
\UIC{$[(\grg_i\,;\,\grg_i)]_i,(\gra_k\,;\,\gra_k)$}
\RL{\scriptsize\intLk}
\UIC{$[(\grg_i\,;\,\grg_i)]_i,(\gra_L\cap\gra_R\,;\,\gra_k)$}
\RL{\scriptsize\cut}\BIC{$[(\GG_i\,;\,\grg_i)]_i,(\GG\,;\,\gra_k)$}\DP
\end{center}}\medskip

{\em \bf (if)}.

By induction on the sequent calculus style derivation.

We will show just the case of implication, local conjunction elimination and cut.

Case \impL: Let $Z_i=\GG_i,\DD_i,\gra_i\rw\grb_i$. Then:\smallskip

{\small
\begin{center}
\AXC{$[(\DD_i,\grb_i\,;\,\grg_i)]_i$}
\RL{\scriptsize\weakex}\dashedLine
\UIC{$[(Z_i,\grb_i\,;\,\grg_i)]_i$}
\RL{\scriptsize\impI}
\UIC{$[(Z_i\,;\,\grb_i\rw\grg_i)]_i$}
\AXC{$\ \ $}\RL{\scriptsize\ax}
\UIC{$[(\gra_i\rw\grb_i\,;\,\gra_i\rw\grb_i)]_i$}
\RL{\scriptsize\weakex}\dashedLine
\UIC{$[(Z_i\,;\,\gra_i\rw\grb_i)]_i$}
\AXC{$[(\GG_i\,;\,\gra_i)]_i$}
\RL{\scriptsize\weak}\dashedLine
\UIC{$[(Z_i\,;\,\gra_i)]_i$}
\RL{\scriptsize\impE}
\def\defaultHypSeparation{\hskip .1in}
\BIC{$[(Z_i\,;\,\grb_i)]_i$}
\RL{\scriptsize\impE}
\BIC{$[(\underbrace{\GG_i,\DD_i,\gra_i\rw\grb_i}_{Z_i}\,;\,\grg_i)]_i$}\DP
\end{center}}\smallskip

Case \intLk: Let $\GG_i=\GG'_i,\grtt_i$ and $E_i=\GG'_i,\grtt_i,\grtt_i=\GG_i,\grtt_i$. Then:\smallskip

{\small
\begin{center}
\def\ScoreOverhang{1pt}
\AXC{$[(\GG_i\,;\,\grg_i)]_i,(\GG,\gra_k\,;\,\grb)$}
\RL{\scriptsize\weakex}\dashedLine
\UIC{$[(E_i\,;\,\grg_i)]_i,(\GG,\gra_L\cap\gra_R,\gra_k\,;\,\grb)$}
\RL{\scriptsize\impI}
\UIC{$[(\GG_i\,;\,\grtt_i\rw\grg_i)]_i,(\GG,\gra_L\cap\gra_R\,;\,\gra_k\rw\grb)$}
\AXC{$\ \ $}\RL{\scriptsize\ax}
\UIC{$[(\grtt_i\,;\,\grtt_i)]_i,(\gra_L\cap\gra_R\,;\,\gra_L\cap\gra_R)$}
\RL{\scriptsize\weakex}\dashedLine
\UIC{$[(\GG_i\,;\,\grtt_i)]_i,(\GG,\gra_L\cap\gra_R\,;\,\gra_L\cap\gra_R)$}
\RL{\scriptsize\intEk}
\UIC{$[(\GG_i\,;\,\grtt_i)]_i,(\GG,\gra_L\cap\gra_R\,;\,\gra_k)$}
\RL{\scriptsize\impE}
\def\defaultHypSeparation{\hskip .19in}
\BIC{$[(\GG_i\,;\,\grg_i)]_i,(\GG,\gra_L\cap\gra_R\,;\,\grb)$}\DP
\end{center}}\medskip

Case \cut:\smallskip

{\small
\begin{center}
\AXC{$[(\DD_i,\gra_i\,;\,\grb_i)]_i$}
\RL{\scriptsize\weakex}\dashedLine
\UIC{$[(\GG_i,\DD_i,\gra_i\,;\,\grb_i)]_i$}
\RL{\scriptsize\impI}
\UIC{$[(\GG_i,\DD_i\,;\,\gra_i\rw\grb_i)]_i$}
\AXC{$[(\GG_i\,;\,\gra_i)]_i$}
\RL{\scriptsize\weak}\dashedLine
\UIC{$[(\GG_i,\DD_i\,;\,\gra_i)]_i$}
\RL{\scriptsize\impE}
\BIC{$[(\GG_i,\DD_i\,;\,\grb_i)]_i$}\DP
\end{center}}

 \end{proof}

\section{$\ISC$ and $\LJ$}
A derivation in $\ISC$ corresponds to a set of derivations in Intuitionistic Sequent Calculus ($\LJ$),
where the two conjunctions collapse into $\wedge$.


\begin{dfn}\label{iusc-nj}
Given a $\ISC$ derivation $\pi$, we define the set $\tilde{\pi}=(\pi_i)_{i\in I}$ of $\LJ$
derivations by induction on the structure of $\pi$ as follows:\bigskip

$\bullet$ if\ \ \ $\infer[\scriptsize\ax]{\pi:[(\alpha_i;\alpha_i)\mid i\in I]}{}$
\,then\, $\tilde{\pi} = (\pi_i)_{i\in I}\,$ with\ \ \ $\infer[\LJax]{\pi_i:\alpha_i \vdash \alpha_i}{}$\bigskip

$\bullet$ if\ \ \ $\infer[\scriptsize\cut]{\pi:[(\Gamma_i, \Delta_i;\beta_i)\mid i\in I]}
{\pi^1:[(\Gamma_i;\alpha_i)\mid i\in I] \quad  \pi^2:[(\Delta_i, \alpha_i; \beta_i)\mid i\in I]}$
\,with\, $\tilde{\pi^1}=(\pi_i^1: \Gamma_i \vdash \alpha_i)_{i\in I}$\bigskip\\
and\, $\tilde{\pi^2} =(\pi_i^2: \Delta_i, \alpha_i\vdash \beta_i)_{i\in I}$,
then\, $\tilde{\pi}= (\pi_i)_{i\in I}$ \,with\ \ \ $\infer[\LJcut]{\pi_i:\Gamma_i, \Delta_i\vdash\beta_i}
{\pi^1_i:\Gamma_i \vdash \alpha_i \quad \pi^2_i:\Delta_i, \alpha_i\vdash \beta_i}$\bigskip

$\bullet$ if\ \ \ $\infer[\scriptsize\weak]{\pi:[(\Gamma_i, \alpha_i;\beta_i)\mid i\in I]}{\pi':[(\Gamma_i; \beta_i)\mid i\in I]}$ \,with\,
$\tilde{\pi'}= (\pi'_i:\GG_i\vdash\grb_i)_{i\in I}$, then\,
$\tilde{\pi}=(\pi_i)_{i\in I}$ \,with\ \ \ $\infer[\LJweak]{\pi_i:\Gamma_i, \alpha_i\vdash\beta_i}{\pi'_i:\Gamma_i\vdash \beta_i}$\bigskip

$\bullet$ $\tilde{\pi}$ is defined in the same way as above for \exch, \contr\ and \impR.\bigskip


$\bullet$ if\ \ \ $\infer[(\scriptsize \bf R)]{\pi:\mM}
{\mM' }$ \, where $(\scriptsize \bf R)=\FUS, (\scriptsize \bf P)$\, then\,
$\tilde{\pi}= \tilde{clean(\pi)}$
\bigskip

$\bullet$ if\ \ \ $\infer[\scriptsize\impL]
{\pi:[(\Gamma_i,\Delta_i,\alpha_i \rightarrow \beta_i ;  \gamma_i )\mid i \in I]}
{\pi^1:[(\Gamma_i ; \alpha_i ) \mid i \in I ] \quad \pi^2:[(\Delta_i,\beta_i ; \gamma_i) \mid i \in I ]}$
\,with\, $\tilde{\pi^1}= (\pi_i^1: \Gamma_i  \vdash \alpha_i)_{i\in I}$\bigskip\\
and\, $\tilde{\pi^2} = (\pi_i^2 : \Delta_i,\beta_i \vdash \gamma_i)_{i\in I}$,
then\, $\tilde{\pi}=(\pi_i)_{i\in I}$ \,with\ \ \ $\infer[\LJimpL]
{\pi_i:\Gamma_i,\Delta_i,\alpha_i \rightarrow \beta_i \vdash  \gamma_i}
{\pi^1_i:\Gamma_i \vdash \alpha_i \quad \pi^2_i:\Delta_i,\beta_i \vdash \gamma_i}$\bigskip

$\bullet$ if\ \ \ $\infer[\scriptsize\andL]
{\pi:[(\Gamma_i,\alpha^0_i \aconj \alpha^1_i ; \beta_i) \mid i \in I]}
{\pi':[(\Gamma_i,\alpha^0_i,\alpha^1_i ; \beta_i) \mid i \in  I ]}$
\,with\, $\tilde{\pi'}=(\pi'_i : \Gamma_i,\alpha^0_i,\alpha^1_i \vdash \beta_i)_{i\in I}$,
then\, $\tilde{\pi}=(\pi_i)_{i\in I}$\bigskip \\
with\ \ \ $\infer[\LJandL]{\pi_i:\Gamma_i,\alpha^0_i \LJand \alpha^1_i \vdash \beta_i}{\pi'_i:\Gamma_i,\alpha^0_i,\alpha^1_i \vdash \beta_i}$\bigskip

$\bullet$ if\ \ \ $\infer[\scriptsize\andR]
{\pi:[(\Gamma_i,\Delta_i ; \alpha_i \aconj \beta_i) \mid  i \in I]}
{\pi^1:[ (\Gamma_i; \alpha_i) \mid  i \in I ] \quad \pi^2:
[(\Delta_i; \beta_i )\mid  i \in I]}$
\,with\, $\tilde{\pi^1}=(\pi_i^1: \Gamma_i \vdash \alpha_i)_{i\in I}$\bigskip\\
and\,  $\tilde{\pi^2}=(\pi_i^2: \Delta_i \vdash \beta_i)_{i\in I}$,
then\, $\tilde{\pi} = (\pi_i)_{i\in I}$ \,with\ \ \ $\infer[\LJandR]
{\pi_i:\Gamma_i,\Delta_i \vdash  \alpha_i \LJand\beta_i}{\pi_i^1:\Gamma_i \vdash \alpha_i \quad \pi_i^2:\Delta_i \vdash \beta_i}$\bigskip

$\bullet$ if\ \ \ $\infer[\scriptsize\mbox{\bf($\sconj$L$_j$)}]
{\pi:[(\Gamma, \alpha_0 \sconj \alpha_1 ; \beta)] \cup  [(\Gamma_i ; \gamma_i) \mid i \in I]}
{\pi':[(\Gamma , \alpha_j ; \beta)] \cup  [(\Gamma_i ; \gamma_i) \mid i \in I]}$
\,with\, $\tilde{\pi'}= \{\pi'_\star: \Gamma , \alpha_j \vdash \beta\} \cup (\pi'_i: \Gamma_i \vdash \gamma_i)_{i\in I}$,\bigskip\\
then\, $\tilde{\pi}= \{\pi_\star\}\cup (\pi'_i)_{i\in I}$
\,with\ \ \ $\infer[\LJandL]{\pi_\star:\Gamma, \alpha_0 \LJand \alpha_1 \vdash \beta}{\infer[\scriptsize\weak,\exch]{\Gamma, \alpha_0,\alpha_1 \vdash \beta}{\pi'_\star:\Gamma,\alpha_j\vdash \beta}}$\bigskip

$\bullet$ if\ \ \ $\infer[\scriptsize\mbox{\bf ($\sconj$R)}]
{\pi:[(\Gamma ; \alpha \cap \beta)] \cup [(\Gamma_i ; \gamma_i) \mid i \in I]}
{\pi':[(\Gamma ; \alpha),(\Gamma ; \beta)] \cup [(\Gamma_i ; \gamma_i) \mid i \in I]}$
\,with\, $\tilde{\pi'}=\{\pi'_\alpha: \Gamma \vdash \alpha\}\cup\{\pi'_\beta: \Gamma \vdash \beta\}
\cup (\pi'_i: \Gamma_i \vdash \gamma_i)_{i \in I}$,\bigskip\\
then\, $\tilde{\pi}= \{\pi_\star\}\cup (\pi'_i)_{i\in I}$ \,with\ \ \ $\infer[\scriptsize\exch,\contr]{\pi_\star:\Gamma\vdash \alpha \LJand \beta}{\infer[\LJandR]
{\Gamma,\Gamma\vdash \alpha \LJand \beta}{\pi'_\alpha: \Gamma\vdash\alpha \quad \pi'_\beta:\Gamma\vdash\beta}}$\bigskip

\end{dfn}\bigskip

Let us stress the fact that, by Definition~\ref{iusc-nj}, each $\pi_i$ in $\tilde{\pi}$ is a derivation in $\LJ$. The
translation from $\ISC$ to $\LJ$ is almost standard, but for the rules $\FUS, \bf (P)$, where the result of Lemma
\ref{clean-lem} has been used.

\section{Cut-elimination in LJ: a short survey}
Analogously to $\LJ$, $\ISC$ enjoys cut-elimination, i.e., the cut rule is admissible.
In order to give a proof of this fact the most natural idea is to mimic the cut elimination procedure of $\LJ$,
in a parallel way. We will use Definition \ref{iusc-nj}, which associates to every derivation in
$\ISC$ a set of derivations in $\LJ$. 
There are different versions of such a proof in the literature; we suggest the versions in~\cite{gir,troel,omondi}.
Let us briefly recall it. \\

The cut-elimination algorithm in $\LJ$ is based on a definition of some elementary cut-elimination steps, each one depending on the
premises of the cut-rule, and on a certain order of applications of such steps. If the elementary steps are applied in the correct order,
then the procedure eventually stops and the result is a cut-free proof of the same sequent.

The principal problem for designing this algorithm is in defining the elementary step
when the contraction rule is involved. In fact the natural way
of defining the contraction step is the following:
$$
\infer[\scriptsize\mbox{\bf (cut)}]
{\Gamma, \Delta \vdash \beta}
{\Gamma \vdash\alpha \quad
\infer[\scriptsize\mbox{\bf (C)}]{\Delta, \alpha
 \vdash\beta}{\Delta, \alpha,\alpha \vdash\beta}}
 \smallskip\\
\qquad\rightarrow\quad\smallskip\\
\infer[\scriptsize\mbox{\bf (CX)}]{\Gamma, \Delta \vdash \beta}
{\infer[\scriptsize\mbox{\bf (cut)}]{\Gamma, \Gamma,\Delta \vdash \beta}
{\Gamma \vdash\alpha \quad
\infer[\scriptsize\mbox{\bf (cut)}] {\Gamma,\Delta, \alpha
 \vdash\beta}{\Gamma \vdash\alpha \quad
\Delta, \alpha,{\alpha}
 \vdash\beta }
 }
 }
$$
where $\footnotesize\mbox{\bf (CX)}$ represents a suitable number of contraction and exchange rules.
This step generates two cuts, such that the last one appears of the same ``size'' and also of the same
(or maybe greater) ``height'' as the original one, according to any reasonable notions of size and height.
A standard way to solve this problem is to strenghen the cut rule, allowing it to eliminate at the same
time more than one occurrences of a formula, in the following way:
$$\infer[\scriptsize\mbox{\bf (multicut)}]
{\Gamma, \Delta \vdash \beta}
{\Gamma \vdash\alpha \quad
\Delta, \overbrace{\alpha,...,\alpha}^m \vdash\beta \quad m\geq 1}$$
It is easy to check that replacing the cut-rule with the multicut-rule
produces an equivalent system. Abusing the naming, in what follows we will use the
name ``cut'' for multicut and we will call ``$\LJ$'' the system
obtained by replacing cut by multicut.

Assuming, for simplicity, that $\Gamma$ and $\Delta$ do not contain any
occurrence of $\alpha$, the contraction step now becomes:
$$
\infer[\scriptsize\mbox{\bf (cut)}]
{\Gamma, \Delta \vdash \beta}
{\Gamma \vdash\alpha \quad
\infer[\scriptsize\mbox{\bf (C)}]{\Delta, \alpha
 \vdash\beta}{\Delta, \alpha,\alpha \vdash\beta}}
 \smallskip\\
\qquad\rightarrow\qquad\smallskip\\
\infer[\scriptsize\mbox{\bf (cut)}]{\Gamma, \Delta \vdash \beta}
{\Gamma \vdash\alpha \quad
\Delta, \alpha,\alpha
 \vdash\beta
 }$$
and the new $cut$-rule which has been generated has been moved up in the derivation with respect to the original one.
The cut rule can be eliminated and, in order to design the algorithm doing it, the following measures are needed.

\begin{dfn}
\begin{itemize}
\item[i)] The {\em size} of a formula $\sigma$ (denoted by $|\sigma |$) is the number of symbols in it;
\item[ii)] The {\em height} of a derivation is the number of rule applications in its derivation tree. Let $h(\pi)$ denote the
height of $\pi$;
\item[iii)] The {\em measure} of a cut $\pi$, denoted by $m(\pi)$, is a pair ($s,h$), where $s$ is the size of the formula
eliminated by it and $h$ is the sums of the heights of its premises.
\end{itemize}
\medskip
We consider measures ordered according to a restriction of lexicographic order, namely:
 $$(s,h) < (s',h') \mbox{ if  either   }  s<s' \mbox{   and   } h\leq h' \mbox{   or   }
s \leq s' \mbox{   and   } h < h'.$$ 
\end{dfn}

Then the following lemma holds:
\begin{lem}\label{cut-elim-lem}
Let $\pi:\Gamma \vdash \sigma$ be a derivation in $\LJ$, with some cut rule applications. Let
$\pi':\Gamma \vdash \sigma$ be the derivation obtained from $\pi$ by applying an elementary cut-elimination step
to a cut closest to the axioms, let $c$.  Then $\pi'$ does not contain $c$ anymore, and, if it contains some new cuts
with respect to $\pi$,
their measure are less than the measure of
$c$. Moreover the measures of the cuts different from $c$ do not
increase.
\end{lem}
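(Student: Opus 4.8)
The plan is to argue by cases on the elementary step that is applied to $c$, the relevant case being fixed by the last rules of the two premises of $c$; write $\pi_\ell$ for the premise with conclusion $\Gamma_\ell\vdash\alpha$ and $\pi_r$ for the one with conclusion $\Delta_r,\alpha,\ldots,\alpha\vdash\beta$, so that $\Gamma=\Gamma_\ell,\Delta_r$ and $\sigma=\beta$. Since $c$ is chosen closest to the axioms, both $\pi_\ell$ and $\pi_r$ are cut-free, and every cut of $\pi$ distinct from $c$ lies strictly below the conclusion of $c$ or in a subtree disjoint from it; I would postpone the discussion of those cuts to the end, after observing, case by case, that the subderivation substituted for the one rooted at the conclusion of $c$ is never taller than the original.

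First I would dispose of the \emph{axiom and weakening cases}: if $\pi_\ell$ or $\pi_r$ is an instance of \LJax\ on $\alpha$, or if a displayed occurrence of $\alpha$ in $\pi_r$ is principal in a \LJweak, then the step deletes $c$ altogether and replaces it by $\pi_\ell$ or $\pi_r$ followed by a bounded block of \LJweak/\LJexch/\LJcontr\ applications (and, with multicut, by a smaller cut on the remaining occurrences of $\alpha$, which has the same cut formula but strictly fewer rule applications above it, hence strictly smaller measure). No genuinely new cut of measure $\ge m(c)$ is produced.

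Next come the \emph{principal cases}, where $\alpha$ is the formula just introduced by a right rule in $\pi_\ell$ and by the matching left rule in $\pi_r$. For $\alpha=\alpha_0\to\alpha_1$ one rewrites $c$ into a cut on $\alpha_0$ whose conclusion feeds a cut on $\alpha_1$; for $\alpha=\alpha_0\LJand\alpha_1$ one rewrites $c$ into a single cut on the component $\alpha_k$ selected by \LJandL. In each subcase the cut formula of every new cut is a proper subformula of $\alpha$, so the first component of its measure strictly decreases, while a direct count of heights shows the second component does not exceed that of $m(c)$; hence by the definition of $<$ on measures each new cut has measure strictly below $m(c)$, and the same count bounds the height of the replacing subderivation by that of the original.

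Finally the \emph{commutative cases}, where $\alpha$ is not principal in the last rule $r$ of $\pi_\ell$ or of $\pi_r$: one permutes $c$ upward past $r$, splitting $c$ into two cuts when $r$ is binary (\LJimpL\ or \LJandR). In every resulting cut the cut formula is still $\alpha$, so its size is unchanged, while one premise has been replaced by an immediate subderivation of it, so the sum of the heights of the premises strictly drops; thus each new cut again has measure strictly below $m(c)$. The delicate sub-case is $r=\LJcontr$ acting on an occurrence of $\alpha$ in $\pi_r$: permuting $c$ above the contraction is legitimate precisely because we work with multicut, so that the number of cut occurrences may grow, and as the preceding section already shows this keeps the cut formula fixed and strictly lowers the height. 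Collecting the cases, $\pi'$ no longer contains $c$, every new cut has measure $<m(c)$, and — since in each case the subderivation installed at the conclusion of $c$ is no taller than the one it replaces — the sums of the heights of the premises of each cut below $c$ cannot increase, so the measures of the cuts other than $c$ do not increase either. I expect the main obstacle to be precisely this contraction sub-case together with the careful height book-keeping needed to certify the measure inequalities in the presence of multicut.
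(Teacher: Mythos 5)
First, a point of reference: the paper does not actually prove Lemma~\ref{cut-elim-lem}. Section~4 is explicitly a survey, and the proof is delegated to the cited literature; the only step the authors display is the contraction/multicut step. So your case analysis (axiom and weakening cases, principal cases, commutative cases, with multicut absorbing the contraction problem) is precisely the standard route the paper is pointing to, and in outline it is the right one; there is no competing argument in the paper to compare it against.

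That said, the two inequalities you dispose of by assertion are exactly where the details can fail with the paper's own definitions, so they deserve to be named as gaps. (1) The order on measures here is the \emph{product} order, not the lexicographic one: $(s,h)<(s',h')$ requires both $s\le s'$ and $h\le h'$. Since ``height'' is defined as the total number of rule applications and every reduction step inserts a block of $\weak/\exch/\contr$ to repair contexts, the claim that ``a direct count of heights shows the second component does not exceed that of $m(c)$'' in the principal cases is not automatic: any structural rules landing \emph{between} the two new cuts in the $\rightarrow$ case inflate the height component of the outer cut, and a decrease in $s$ alone does not make the measure smaller under this order. One must either count those insertions into the bound or switch to depth/rank with a genuinely lexicographic order, as the cited texts do. (2) The final clause --- that the measures of the cuts other than $c$ do not increase --- rests on your claim that the subderivation installed at the conclusion of $c$ is ``never taller'' than the one it replaces. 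This is false in the commutative case where $c$ is permuted above a binary rule ($\impL$ or $\andR$) with occurrences of the multicut formula distributed over both premises: there $\pi_\ell$ is duplicated, the number of rule applications strictly grows, and hence so does the height component of every cut sitting below $c$. Each \emph{new} cut still has measure below $m(c)$ (each pairs one copy of $\pi_\ell$ with a proper subderivation of $\pi_r$), so the elimination of the topmost cut still terminates; but the ``moreover'' clause, as you argue it, does not follow. You should either weaken that clause or observe that it is not needed for the theorem, since once the topmost cut is fully eliminated one simply recurses on the finitely many remaining cuts, whatever their new measures.
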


 The cut elimination property can now be stated.

\begin{thm}
Let $\pi$ be a derivation in $\LJ$. Then, there is a derivation $\pi'$ proving the same judgement which does not use
the cut-rule.
\end{thm}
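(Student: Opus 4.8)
The plan is to derive this cut-elimination theorem for $\LJ$ (with $\cut$ replaced by $\multicut$, as stipulated in the excerpt) as a straightforward consequence of Lemma~\ref{cut-elim-lem} via a double-induction argument. First I would reduce to the case of a single topmost cut: given $\pi$ containing cut applications, pick an occurrence $c$ of $\cut$ that is closest to the axioms, i.e.\ such that no cut occurs above $c$ in the derivation tree. Lemma~\ref{cut-elim-lem} then produces $\pi'$ proving the same sequent in which $c$ has been eliminated, possibly at the cost of introducing new cuts whose measures are strictly below $m(c)$, while leaving the measures of all other pre-existing cuts unchanged. The key point is that the restricted lexicographic order on measures $(s,h)$ defined in the preceding definition is well-founded, so we may do induction on a well-chosen aggregate of the cut measures.

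The main step is choosing the right termination measure. I would associate to any derivation $\pi$ the finite multiset $\mu(\pi)$ of the measures of all cut applications occurring in $\pi$, ordered by the Dershowitz--Manna multiset extension of the well-founded order $<$ on pairs $(s,h)$; this extension is again well-founded. By Lemma~\ref{cut-elim-lem}, the step replacing $\pi$ by $\pi'$ removes the measure $m(c)$ from the multiset, adds finitely many measures each strictly smaller than $m(c)$, and does not increase any other measure — hence $\mu(\pi') < \mu(\pi)$ strictly in the multiset order. (A slightly more elementary alternative, avoiding multiset orders, is to induct first on the maximal measure $M$ occurring among cuts in $\pi$ and, secondarily, on the number of cuts realising that maximal measure: the elementary step applied to a topmost cut of maximal measure either strictly decreases $M$ or keeps $M$ fixed while decreasing the count of maximal cuts, since the freshly created cuts all have measure $< M$ and the untouched ones have measure $\le M$.) Either way, iterating the elementary step terminates.

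Concretely, the argument runs: if $\pi$ is cut-free we are done. Otherwise choose a topmost cut $c$ in $\pi$, apply Lemma~\ref{cut-elim-lem} to obtain $\pi'$ with the same end-sequent and with $\mu(\pi') < \mu(\pi)$, and invoke the induction hypothesis on $\pi'$ to obtain a cut-free derivation of the same judgement. Well-foundedness of the ordering guarantees that this recursion bottoms out, yielding the desired cut-free $\pi'$ proving the same sequent as the original $\pi$. One should also note that a topmost cut always exists whenever any cut is present, since the derivation tree is finite; and that the elementary steps of Lemma~\ref{cut-elim-lem} are exhaustive, i.e.\ defined for every possible shape of the premises of a topmost cut (this is part of the content imported from the cited surveys on $\LJ$).

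The part I expect to require the most care is not any single calculation but the bookkeeping that underlies Lemma~\ref{cut-elim-lem} itself — in particular the contraction case, which is exactly why $\cut$ was strengthened to $\multicut$: with the multicut formulation the contraction step pushes the generated cut strictly upward without recreating a cut of equal-or-larger measure, so the measure genuinely decreases. Since the excerpt grants Lemma~\ref{cut-elim-lem}, the only remaining obstacle here is to package that lemma into a terminating rewriting argument, which is precisely the well-founded-multiset (or max-measure-plus-count) induction sketched above. Hence the theorem follows.
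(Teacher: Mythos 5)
Your proposal is correct and follows essentially the same route as the paper: repeatedly apply the elementary step of Lemma~\ref{cut-elim-lem} to a topmost cut and argue termination from the well-foundedness of the measure order, using the fact that the replaced cut yields only cuts of strictly smaller measure while the remaining cuts do not increase. The paper compresses this into two sentences ("a topmost cut can be eliminated in a finite number of steps; since the number of cuts is finite, the property follows"), whereas you make the well-founded (multiset or max-plus-count) induction explicit — a welcome elaboration, but not a different argument.
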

\begin{proof}
From our definition of measure, a topmost cut can be eliminated in a finite number of steps. Since the number of cuts is finite,
the property follows.
\end{proof}

\section{Cut-elimination in $\ISC$: the elementary cut elimination steps}\label{el-step}

First of all we strenghten the cut-rule of $\ISC$ analogously to $\LJ$, in the following way:
$$\infer[\scriptsize\mbox{\bf (cut)}]
{[(\Gamma_i, \Delta_i; \beta_i) |  i \in I ]}
{[ (\Gamma_i; \alpha_i) | i \in I] \quad
[ (\Delta_i, \overbrace{\alpha_i,...,\alpha_{i}}^{m}; \beta_i) | i \in I] }$$
It is easy to check that we obtain an equivalent system, so we will abuse the notation, and call respectively
$cut$ the new cut, and $\ISC$ the new system.

Then, we can divide the most significant occurrences of cut in $\ISC$ in two cases: the global and local ones,
depending on whether global or local connectives are introduced on cut formulas in the premises.

In the case of global cut-rules, the elementary cut elimination steps act as for $\LJ$, but in parallel
on all the atoms of the involved molecules.
As an example, let us show the case of the {$(\aconj \scriptsize \bf R)$, $(\aconj \scriptsize \bf L)$} cut:\smallskip

\begin{center}
\AXC{$[(\GG_i;\gra_i)]_i$}
\AXC{$[(\DD_i;\grb_i)]_i$}
\RL{\scriptsize\andR}
\BIC{$[(\GG_i,\DD_i;\gra_i\aconj\grb_i)]_i$}
\AXC{$[(Z_i,\gra_i,\grb_i;\grg_i)]_i$}
\RL{\scriptsize\andL}
\UIC{$[(Z_i,\gra_i\aconj\grb_i;\grg_i)]_i$}
\RL{\scriptsize\cut}
\BIC{$[(\GG_i,\DD_i,Z_I;\grg_i)]_i$}\DP
\medskip\\
$\downarrow$\medskip\\
\AXC{$[(\GG_i;\gra_i)]_i$}
\AXC{$[(\DD_i;\grb_i)]_i$}
\AXC{$[(Z_i,\gra_i,\grb_i;\grg_i)]_i$}
\RL{\scriptsize\cut}
\BIC{$[(\DD_i,Z_i,\gra_i;\grg_i)]_i$}
\RL{\scriptsize\cut}
\BIC{$[(\GG_i,\DD_i,Z_i;\grg_i)]_i$}\DP
\end{center}\medskip

The definition of the local cut-elimination steps in $\IUSC$ poses some problems. Let us consider the following example:

Let $\pi_1$ and $\pi_2$ be respectively the following derivations:
$$
\infer[\scriptsize\intR]{\pi_1:[(\alpha, \alpha \ra \alpha; \alpha \sconj \alpha),(\mu, \mu \ra (\mu \sconj \nu); \mu \sconj \nu)]}
{\infer[\scriptsize\impL]{[(\alpha, \alpha \ra \alpha; \alpha),(\alpha, \alpha \ra \alpha; \alpha),(\mu, \mu\ra (\mu \sconj \nu); \mu \sconj \nu)]}
{\infer[\scriptsize\ax]{[(\alpha;\alpha),(\alpha;\alpha),(\mu;\mu)]}{}&\infer[\scriptsize\ax]{[(\alpha;\alpha),(\alpha;\alpha),(\mu \sconj \nu;\mu \sconj \nu)]}{}
}
}
$$
$$
\infer[\scriptsize\intL]{\pi_2:[(\alpha \sconj \alpha;\alpha \sconj \alpha), (\mu \sconj \nu; \mu)]}
{\infer[\scriptsize\ax]{[(\alpha \sconj \alpha;\alpha \sconj \alpha), (\mu ; \mu)]}{}}
$$
And let $\pi$ be:
$$
\infer[\scriptsize\cut]{\pi:[(\alpha, \alpha \ra \alpha; \alpha \sconj \alpha),(\mu, \mu \ra (\mu \sconj \nu); \mu)]}
{\pi_{1} \qquad\qquad \pi_{2}}
$$
Notice that the right and left introduction of the connective $\sconj$ create an asymmetry because they
are applied on different atoms, so it looks quite impossible to perform a cut-elimination step.
In fact, the derivation $\pi$ corresponds (modulo some structural rules) to the following derivation in $\ISL$:
$$
\infer[\scriptsize\intI]{[(\alpha, \alpha \ra \alpha; \alpha \sconj \alpha),(\mu, \mu \ra (\mu \sconj \nu); \mu) ]}
{\infer[\scriptsize\intE]{[(\alpha, \alpha \ra \alpha; \alpha),(\alpha, \alpha \ra \alpha; \alpha),(\mu,\mu \ra (\mu \sconj \nu); \mu )] }
{\infer[\scriptsize\impE]{[(\alpha, \alpha \ra \alpha; \alpha),(\alpha, \alpha \ra \alpha; \alpha),(\mu, \mu \ra (\mu \sconj \nu); \mu \sconj \nu)] }
{\infer[\scriptsize\ax]{[(\alpha \ra \alpha;\alpha \ra \alpha), (\alpha \ra \alpha;\alpha \ra \alpha), (\mu \ra (\mu \sconj \nu);\mu \ra (\mu \sconj \nu))]}{}
& \infer[\scriptsize\ax]{[(\alpha;\alpha), (\alpha;\alpha), (\mu;\mu)]}{}}}}
$$
This derivation is in normal form according to the definition given in~\cite{isl}; roughly speaking, a derivation in
$\ISL$ is in normal form if it does not contain an introduction of a connective immediately followed by an elimination
of the same connective, modulo some structural rules.

Since the problem is due to the presence of the local connective, in particular
when the two premises of a cut are the right and left introduction of $\sconj$ on different atoms,
the solution is to restrict the use of this connective,
in particular forbidding atoms where it is in principal position.

\begin{dfn}
\begin{itemize}
\item[i)] A formula is {\em canonical} if its principal connective is not $\sconj$.
\item[i)] An $\ISC$ derivation is {\em canonical} if all the formulas introduced by axioms and $(\scriptsize \bf W)$
are canonical.
\end{itemize}
\end{dfn}

\begin{lem}
Let $\pi: \ISCD \mM$. Then, there is a canonical derivation $\pi': \ISCD  \mM$.
\end{lem}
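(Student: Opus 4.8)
The plan is to prove that every $\ISC$ derivation can be transformed into a canonical one by induction on the structure of the derivation $\pi$, pushing the non-canonicity (an axiom or a $(\scriptsize \bf W)$ introducing a formula with principal connective $\sconj$) towards the root and resolving it there. The key observation is that if an atom $(\Gamma;\sigma)$ occurring in the premise of some rule has a non-canonical component — say $\sigma = \sigma_L \sconj \sigma_R$ was introduced by an axiom, or a context formula $\sigma_L \sconj \sigma_R$ was introduced by weakening — then we can replace that axiom instance $[(\sigma_L\sconj\sigma_R;\sigma_L\sconj\sigma_R)]\cup\cdots$ by a short derivation built from the canonical axiom $[(\sigma_L;\sigma_L),(\sigma_R;\sigma_R)]\cup\cdots$ followed by the local rules $(\sconj \scriptsize \bf R)$ and $(\sconj \scriptsize \bf L_k)$, plus $\FUS$ to glue the two copies; by Lemma~\ref{clean-lem} we may then clean the result. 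Similarly, a non-canonical $(\scriptsize \bf W)$ introducing $\sigma_L\sconj\sigma_R$ in the context can be replaced by a canonical $(\scriptsize \bf W)$ introducing $\sigma_L$ followed by $(\sconj \scriptsize \bf L_L)$.

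Concretely, first I would set up the induction: for the base case, an axiom $[(\alpha_i;\alpha_i)\mid i\in I]$ is already canonical iff every $\alpha_i$ is canonical; when some $\alpha_{i_0} = \sigma_L\sconj\sigma_R$, I replace it as sketched above, using the canonical axiom with $\sigma_L,\sigma_R$ on two separate atoms for index $i_0$ and then applying $(\sconj \scriptsize \bf R)$ (to get $\sigma_L\sconj\sigma_R$ in the succedent of one atom), $(\sconj \scriptsize \bf L_L)$ and $(\sconj \scriptsize \bf L_R)$ to build $\sigma_L\sconj\sigma_R$ in the two antecedents, and finally $\FUS$ to merge them — iterating over all non-canonical components and all non-canonical indices. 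For the inductive step, I apply the induction hypothesis to the immediate subderivation(s) to get canonical derivations of the premises, then reapply the same last rule; the only rule that can destroy canonicity is $(\scriptsize \bf W)$, and there I handle a non-canonical weakened formula by the analogous local detour ($(\scriptsize \bf W)$ on the left component followed by $(\sconj \scriptsize \bf L)$'s). All the auxiliary rules $(\scriptsize \bf P)$ and $\FUS$ introduced along the way are eliminated by $clean$, which preserves canonicity since $clean$ only commutes $(\scriptsize \bf P)$ and $\FUS$ upward and deletes them at axioms.

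The main obstacle I expect is bookkeeping rather than conceptual: a single axiom molecule may have several indices with non-canonical formulas, each of which may have nested $\sconj$ at various depths in both the succedent and the context formulas, so one must argue (by a secondary induction on, say, the total number of $\sconj$ occurrences in principal position across the molecule) that the detour genuinely decreases this measure and terminates. Care is also needed to check that inserting $(\sconj \scriptsize \bf R)$ below an axiom does not interfere with the other atoms $[(\alpha_i;\alpha_i)\mid i\neq i_0]$ carried along as the "passive" part $\mM$ of the local rules — but this is immediate since the local rules $(\sconj \scriptsize \bf R)$, $(\sconj \scriptsize \bf L_k)$, and $\FUS$ all act on a single atom and leave $\mM$ untouched. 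Once termination of the inner transformation is established, the outer structural induction goes through routinely, and a final application of $clean$ (Lemma~\ref{clean-lem}) yields the desired canonical derivation $\pi' : \ISCD \mM$.
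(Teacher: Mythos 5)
Your strategy is essentially the paper's: the paper proves this lemma by induction on $\pi$, with the base case handled exactly as you describe --- replacing a non-canonical axiom by a canonical axiom on the decomposed formulas followed by local $\sconj$ rules --- and your treatment of \weak\ and of nested intersections fills in details the paper leaves implicit. Two local slips are worth fixing, though. First, your order of rules in the base case does not typecheck: starting from the axiom $[(\sigma_L;\sigma_L),(\sigma_R;\sigma_R)]\cup\cdots$ you cannot apply \intR\ first, since that rule requires the two atoms to carry the \emph{same} context; you must first apply \intLu\ and \intLz\ to turn both contexts into $\sigma_L\sconj\sigma_R$, and then \intR, which itself merges the two atoms into one --- so no \FUS\ is needed at all (and note that \FUS\ read downwards \emph{duplicates} an atom rather than merging two, so it could not play the role you assign to it). Second, your proposed termination measure, the number of $\sconj$ occurrences in principal position across the molecule, can \emph{increase} under the detour: if $\sigma_L$ and $\sigma_R$ are themselves intersections, replacing the single non-canonical axiom formula $\sigma_L\sconj\sigma_R$ by axioms on $\sigma_L$ and $\sigma_R$ trades one principal $\sconj$ for two. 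The inner induction should instead be on the size of the offending formula (or the total size of axiom/weakening formulas), which does strictly decrease. With these repairs your argument coincides with the paper's.
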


\begin{proof}
By induction on $\pi$.
\end{proof}\medskip

The following example illustrates the base-case.

\begin{exam}
Consider the derivation:
$$
\infer[\scriptsize\mbox{\bf (Ax)}]{[(\grs \sconj \grt; \grs \sconj \grt)]}{}$$
The corresponding canonical derivation can be obtained by introducing
the molecule $[(\grs;\grs),(\grt;\grt)]$ by an axiom and then by applying a sequence of two
$(\sconj L)$ rules followed by a $(\sconj R)$ rule.

\end{exam}



From now on, we will only consider canonical derivations to define the cut-elimination steps.
Moreover we assume that the derivations are also clean.
As said before, the global rules are completely standard, since they act like in $\LJ$, but in parallel
on all the atoms of the involved molecules. Thus, we will expose only some characteristic cases dealing
with $\sconj$. Note the use of the structural rules ${\bf (P)}$ and $\FUS$.

\begin{dfn}\label{cut-step}
A cut-elimination step in $\ISC$ is defined by cases. We assume do not have applications of rules ${\bf (P)}$ and $\FUS$, and we show
the most interesting structural and local cases.




{\bf Commutation steps:}\medskip  

$\bullet$ \fbox{\bf case of \intR}

$$
\begin{array}
{c}
\infer[\scriptsize\cut]{[(\Gamma_0, \Delta_0; \alpha_0\cap\beta_0)] \cup [(\Gamma_i, \Delta_i; \delta_i)]_{1 \leq i \leq m}}
{\pi:[(\Gamma_0; \gamma_0)]\cup [(\Gamma_i; \gamma_i)]_{1 \leq i \leq m}
&
\infer[\scriptsize\intR]{\pi'':[(\Delta_0, \gamma_0; \alpha_0\cap\beta_0)] \cup [(\Delta_i, \gamma_i; \delta_i)]_{1 \leq i \leq m}}
{\pi':[(\Delta_0, \gamma_0; \alpha_0),(\Delta_0, \gamma_0; \beta_0)] \cup [(\Delta_i, \gamma_i; \delta_i)]_{1 \leq i \leq m}}}
\smallskip\\
\downarrow\smallskip\\
\infer[\scriptsize\intR]{[(\Gamma_0, \Delta_0; \alpha_0\cap\beta_0)] \cup [(\Gamma_i, \Delta_i; \delta_i)]_{1 \leq i \leq m}}
{\infer[\scriptsize\cut]{[(\Gamma_0, \Delta_0; \alpha_0),(\Gamma_0, \Delta_0; \beta_0)] \cup
[(\Gamma_i, \Delta_i; \delta_i)]_{1 \leq i \leq m}}
{
\infer[\scriptsize\FUS]{\pi''':[(\Gamma_0; \gamma_0),(\Gamma_0; \gamma_0)]\cup [(\Gamma_i; \gamma_i)]_{1 \leq i \leq m}}
{\pi:[(\Gamma_0; \gamma_0)]\cup [(\Gamma_i; \gamma_i)]_{1 \leq i \leq m}}
&
{\pi':[(\DD_0, \gamma_0; \alpha_0),(\DD_0, \gamma_0; \beta_0)] \cup [(\DD_i, \gamma_i; \delta_i)]_{1 \leq i \leq m}}
}}
\end{array}
$$\smallskip

{\bf Conversion steps:}\medskip


$\bullet$ \fbox{\bf case of symmetric $\sconj$ rule} 

$$
\begin{array}
{c}
\infer[\scriptsize\cut]{[(\Gamma_i, \Delta_i; \gamma_i)]_i}{
\infer[\scriptsize\intR]{[(\GG_0; \alpha_0 \cap \beta_0)] \cup [(\GG_i; \alpha_i)]_{1 \leq i \leq m}}
{\pi:[(\GG_0; \alpha_0), (\GG_0; \beta_0)] \cup [(\GG_i; \alpha_i)]_{1 \leq i \leq m}} &
\infer[\scriptsize\intL]{[(\DD_0, \alpha_0 \cap \beta_0; \gamma_0)] \cup [(\DD_i, \alpha_i; \gamma_i)]_{1 \leq i \leq m}}
{\pi':[(\DD_0, \alpha_0; \gamma_0)] \cup [(\DD_i, \alpha_i; \gamma_i)]_{1 \leq i \leq m}}}
\smallskip\\
\downarrow\smallskip\\
\infer[\scriptsize\cut]{[(\Gamma_i, \Delta_i; \gamma_i)]_i}{
\infer[\scriptsize\Pru]{[(\GG_0; \alpha_0)] \cup [(\GG_i; \alpha_i)]_{1 \leq i \leq m}}{\pi:[(\GG_0; \alpha_0),
(\GG_0; \beta_0)] \cup [(\GG_i; \alpha_i)]_{1 \leq i \leq m}} &
\pi':[(\DD_0, \alpha_0; \gamma_0)] \cup [(\DD_i, \alpha_i; \gamma_i)]_{1 \leq i \leq m}}
\\
\end{array}
$$\medskip


$\bullet$ \fbox{\bf case of asymmetric $\sconj$ rule} 

$$
\begin{array}
{c}
\infer[\scriptsize\cut]{[(\GG_i,\DD_i;\grt_i)]_i\cup[(\GG,\DD;\grg),(\GG',\DD';\rho)]}
{\infer[\scriptsize\intR]{\pi:[(\GG_i;\grs_i)]_i\cup[(\GG;\gra \sconj \grb),(\GG'; \grm \sconj \grn)]}
{\pi_0:[(\GG_i;\grs_i)]_i\cup[(\GG;\gra),(\GG;\grb),(\GG';\grm \sconj \grn)]} & \infer[\scriptsize\intL]{\pi':[(\DD_i,\grs_i;\grt_i)]_i\cup
[(\DD,\gra \sconj \grb;\grg ),(\DD', \grm \sconj \grn;\rho)]}{\pi'_0:[(\DD_i,\grs_i;\grt_i)]_i
\cup [(\DD,\gra \sconj \grb;\grg ),(\DD',\grm; \rho)]}}
\end{array}
\medskip$$
Since the derivation is canonical, the $\sconj$ in the atom $(\GG'; \grm \sconj \grn)$ has been introduced by a
$(\sconj R)$ rule. Substituting this $(\sconj R)$ rule in $\pi$ by $(P)$ and removing the $(\sconj L)$ rule from $\pi'$,
we get:\medskip\medskip

\begin{center}
\AXC{$[\ \ldots,(\GG'';\grm),(\GG'';\grn)]$}
\RL{\scriptsize\intR}\UIC{$[\ \ldots,(\GG'';\grm\sconj\grn)]$}
\noLine
\UIC{\bf$\vdots$}\noLine
\UIC{\ \ \ }\noLine
\UIC{$[(\GG_i;\grs_i)]_i\cup[(\GG;\gra),(\GG;\grb),(\GG';\grm\sconj\grn)]$}
\RL{\scriptsize\intR}\UIC{$\pi:[(\GG_i;\grs_i)]_i\cup[(\GG;\gra\sconj\grb),(\GG';\grm\sconj\grn)]$}\DP
\quad$\rightarrow$\quad
\AXC{$[\ \ldots,(\GG'';\grm),(\GG'';\grn)]$}
\RL{\scriptsize\Pru}\UIC{$[\ \ldots,(\GG'';\grm)]$}
\noLine
\UIC{\bf$\vdots$}\noLine
\UIC{\ \ \ }\noLine
\UIC{$[(\GG_i;\grs_i)]_i\cup[(\GG;\gra),(\GG;\grb),(\GG';\grm)]$}
\RL{\scriptsize\intR}\UIC{$\pi_s:[(\GG_i;\grs_i)]_i\cup[(\GG;\gra\sconj\grb),(\GG';\grm)]$}\DP
\end{center}\medskip\medskip

\begin{center}
\AXC{$\pi_s:[(\GG_i;\grs_i)]_i\cup[(\GG;\gra\sconj\grb),(\GG';\grm)]$}
\AXC{$\pi'_0:[(\DD_i,\grs_i;\grt_i)]_i
\cup [(\DD,\gra \sconj \grb;\grg ),(\DD',\grm; \rho)]$}
\RL{\scriptsize\cut}
\BIC{$[(\GG_i,\DD_i;\grt_i)]_i\cup[(\GG,\DD;\grg),(\GG',\DD';\rho)]$}\DP
\end{center}\medskip\medskip

\end{dfn}

Note that a canonical derivation remains canonical after a cut-elimination step, while a clean derivation can be transformed into a not
clean one.
Morever, note that all the elementary steps related to global connectives and structural rules act locally.
In fact they correspond to apply the standard cut elimination steps of $\LJ$ in parallel in all the atoms of the considered
molecule. But some cases dealing with the local connective $\sconj$ are not local. In particular, in the asymmetric case,
the derivation can be modified in an almost global way.


\section{Cut-elimination in $\ISC$: the algorithm}

We are now ready to define the cut elimination algorithm for $\ISC$.
First we need to define a notion of measure, for proving the termination of the algorithm.
\begin{dfn}\label{cut-measure}
Let $\pi$ be a cut in $\ISC$, with premises $\pi_{1}$ and $\pi_{2}$.
The measure of $\pi$, denoted by $m(\pi)$ is the set
$\{ m(\pi_{i}) \mid \pi_{i} \in \tilde{\pi}\}$.
$m(\pi) \leq m(\pi')$ if and only if 
for every $(s,h) \in m(\pi) \setminus m(\pi')$ there is $(s',h')\in m(\pi') \setminus m(\pi)$ such that $(s,h)\leq (s',h' )$.

\end{dfn}

An important lemma holds.
\begin{lem}\label{clean-measure}
Let $\pi$ be a derivation in $\ISC$, let $\tilde{\pi}=\{\pi_{i}\mid i\in I \}$
and let $\tilde{clean(\pi)}=\{\pi'_{i}\mid i\in I \} $. Then $h(\pi_{i}) = h(\pi'_{i})$, for all $\mid i\in I.$
\end{lem}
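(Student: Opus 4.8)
For $\pi$ a derivation in $\ISC$, with $\tilde\pi = \{\pi_i \mid i \in I\}$ and $\widetilde{clean(\pi)} = \{\pi'_i \mid i \in I\}$, one has $h(\pi_i) = h(\pi'_i)$ for all $i \in I$.

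The plan is to establish something slightly stronger than the stated equality of heights, namely that the two indexed families $\tilde\pi$ and $\widetilde{clean(\pi)}$ of $\LJ$ derivations are literally equal: both are indexed by the atoms of the molecule $\mM$ proved by $\pi$, since the translation of Definition~\ref{iusc-nj} produces exactly one $\LJ$ derivation per atom of the molecule it proves, and one checks $\tilde\pi=\widetilde{clean(\pi)}$ by induction on the structure of $\pi$. Equality of heights, and indeed $\pi_i=\pi'_i$, is then immediate.

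First I would pin down the function $clean$ to be the compositional one implicit in the proof of Lemma~\ref{clean-lem}, in which $\FUS$ and $\Pru$ are commuted strictly upwards and are erased only on reaching an axiom. The structural fact to extract is: if the last rule of $\pi$ is a rule $\mathbf{R}$ other than $\FUS$ and $\Pru$, with immediate subderivation(s) $\pi_1$ (and $\pi_2$), then $clean(\pi)$ is $\mathbf{R}$ applied to $clean(\pi_1)$ (and $clean(\pi_2)$), and is in particular clean. Verifying this is the delicate point and I expect it to be the main obstacle: one must check, case by case on $\mathbf{R}$ --- and especially for the local rules $\intLk$ and $\intR$, and for occurrences of $\FUS$ or $\Pru$ sitting immediately above $\mathbf{R}$ --- that no $\FUS$ or $\Pru$ is ever forced to cross the bottommost $\mathbf{R}$, so that cleaning acts independently on the two subderivations. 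Concretely this just means revisiting the commutation patterns already displayed in the proof of Lemma~\ref{clean-lem}.

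Granting this, the induction is short. The axiom case is trivial, since $clean$ leaves an axiom unchanged. If the last rule of $\pi$ is $\FUS$ or $\Pru$, then $\tilde\pi=\widetilde{clean(\pi)}$ holds by definition (this is precisely the clause of Definition~\ref{iusc-nj} for these rules), so there is nothing to prove. If the last rule is a rule $\mathbf{R}$ distinct from $\FUS$ and $\Pru$, I would apply the induction hypothesis to the immediate subderivation(s), getting $\tilde{\pi_1}=\widetilde{clean(\pi_1)}$ (and $\tilde{\pi_2}=\widetilde{clean(\pi_2)}$); since $clean(\pi)$ is clean and equals $\mathbf{R}$ applied to $clean(\pi_1)$ (and $clean(\pi_2)$), the clause of Definition~\ref{iusc-nj} attached to $\mathbf{R}$ computes $\widetilde{clean(\pi)}$ from $\widetilde{clean(\pi_1)}$ (and $\widetilde{clean(\pi_2)}$) by the very same recipe it uses to compute $\tilde\pi$ from $\tilde{\pi_1}$ (and $\tilde{\pi_2}$), whence the two families coincide. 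In particular the blocks of $\LJ$ structural rules prepended by these clauses --- for instance the weakening/exchange and exchange/contraction sequences in the clauses for $\intLk$ and $\intR$ --- have lengths determined by $\mM$ alone, so no mismatch in heights can arise. One further remark, subsumed by the above: when a $\Pru$ step deletes a sub-multiset of atoms, the $\LJ$ derivations of the surviving atoms are untouched, because the global $\ISC$ rules translate atom by atom while a local rule applied to a deleted atom contributes nothing to any other atom's translation; with the compositional $clean$ fixed above this requires no separate argument.
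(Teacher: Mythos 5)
Your proposal is correct and follows essentially the same route as the paper: the paper's own proof is a one-line appeal to Lemma~\ref{clean-lem} and to Definition~\ref{iusc-nj}, whose clause for $\FUS$ and $\Pru$ sets $\tilde{\pi}=\tilde{clean(\pi)}$, which is exactly the mechanism your induction exploits. You merely make explicit the two points the paper leaves implicit --- that $clean$ commutes with any last rule other than $\FUS$ and $\Pru$, and that the translation clauses then propagate the equality $\tilde{\pi}=\tilde{clean(\pi)}$ compositionally --- so the argument is a faithful, more detailed rendering of the intended proof.
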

\begin{proof}
By Lemma \ref{clean-lem} and by the definition of $\tilde{\pi}$.
\end{proof}
\medskip

Now we are able to design the cut-elimination algorithm.

\begin{dfn}\label{alg}
\begin{itemize}
\item[i)] Let $\pi$ be a clean and canonical derivation in $\ISC$, containing at least one cut rule.
Then $step(\pi)$ is the result of applying to $\pi$ an elementary cut elimination step to a cut closest to the premises.
\item[ii)] The algorithm $\mathcal A$, which takes in input a clean and canonical proof $\pi$ and produces as output a
cut-free proof $\pi'$, is
defined as follows:
\begin{itemize}
\item ${\cal A}(\pi)= \textstyle if  \pi \mbox{ does not contain any cut then } \pi \mbox{ else } {\cal A}(clean(step(\pi)))$.
\end{itemize}
\end{itemize}
\end{dfn}

\begin{lem}\label{termin}
$\cal A(\pi)$ eventually stops.
\end{lem}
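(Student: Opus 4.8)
The plan is to prove termination of $\mathcal{A}$ by exhibiting a well-founded measure on clean, canonical derivations that strictly decreases at each call of $\mathcal{A}$, i.e.\ at each step $\pi \mapsto clean(step(\pi))$. The natural candidate, following the $\LJ$ survey in Section~4, is to reason on the topmost cut $c$ being eliminated and to show that its measure $m(c)$, as defined in Definition~\ref{cut-measure}, strictly dominates the measures of all cuts appearing in $clean(step(\pi))$ that were not already present in $\pi$, while the measures of the other cuts do not increase. This is exactly the shape of Lemma~\ref{cut-elim-lem} for $\LJ$, lifted to $\ISC$ via the translation $\widetilde{(\cdot)}$.

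Concretely, first I would observe that, by Lemma~\ref{clean-measure}, cleaning does not change the heights of the associated $\LJ$ derivations, so for the purpose of measuring cuts we may freely pass between $step(\pi)$ and $clean(step(\pi))$: it suffices to control the cuts in $step(\pi)$. Next, since the global cut-elimination steps act exactly as in $\LJ$ but in parallel on all atoms, the behaviour of each component $\pi_i \in \widetilde{\pi}$ is governed by the corresponding $\LJ$ elementary step, and Lemma~\ref{cut-elim-lem} applies componentwise: each new $\LJ$-cut produced has strictly smaller $\LJ$-measure than the $\LJ$-measure of the eliminated cut in that component, and the others do not increase. Taking the set of these componentwise measures as in Definition~\ref{cut-measure}, this yields $m(c') < m(c)$ for every genuinely new cut $c'$, with the remaining cuts non-increasing. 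Then I would invoke the same terminating-argument as for $\LJ$: the chosen cut is closest to the premises (topmost), so repeatedly applying $step$ to a fixed topmost cut strictly decreases its measure in the well-founded order on finite sets of pairs $(s,h)$; once it is gone, it stays gone and no cut of larger measure is created, so the total multiset of cut-measures decreases in a well-founded order, forcing $\mathcal{A}$ to halt after finitely many calls.

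The main obstacle is the asymmetric $\sconj$ conversion step (the last case of Definition~\ref{cut-step}), which, unlike the global and structural steps, is \emph{not} local: it rewrites the derivation $\pi$ in an almost global way, replacing an $(\sconj\mathrm{R})$ application deep inside $\pi$ by a $(P)$ rule and deleting an $(\sconj\mathrm{L})$ from $\pi'$, then re-cutting. I would have to check carefully that after this surgery the resulting $\widetilde{\pi_s}$ and $\widetilde{\pi'_0}$ have, in each component, heights no larger than the original premises (deleting rules can only shorten derivations, and the translation of $(P)$ and $(\sconj\mathrm{R})$ behaves well by Definition~\ref{iusc-nj} together with Lemma~\ref{clean-lem}), so that the single new cut it produces has a strictly smaller cut-formula size $s$ (we go from $\gamma_0$ or the $\sconj$-formula to a strictly smaller one) and non-increasing height-sum, hence a strictly smaller measure in the sense of Definition~\ref{cut-measure}. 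I would also need to treat the $\FUS$-duplication in the $(\sconj\mathrm{R})$ commutation step: it duplicates an atom but, again by Lemma~\ref{clean-measure}, does not affect the $\LJ$-heights used in the measure, so the duplicated cut is still strictly smaller than $c$. Once these two delicate cases are dispatched — the remaining structural and global cases being routine transcriptions of the $\LJ$ argument — well-foundedness of the order on measures gives termination.
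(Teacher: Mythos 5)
Your proposal follows essentially the same route as the paper's proof: reduce to $\LJ$ via the translation $\tilde{\pi}$ so that Lemma~\ref{cut-elim-lem} handles the global and structural cases componentwise, invoke Lemma~\ref{clean-measure} to discharge the interleaved cleaning steps, and then verify the decrease of the measure of Definition~\ref{cut-measure} explicitly for the $\sconj$ commutation and (symmetric and asymmetric) conversion steps. The only difference is that the paper actually carries out the measure computations you defer ("I would have to check carefully"), and in the asymmetric case the decrease happens only in the component of the atom where $(\sconj\mathrm{L})$ was removed (the $\alpha\sconj\beta$ component is unchanged and cancels under the set-difference ordering), which is consistent with, though slightly more precise than, your description.
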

\begin{proof}
It would be necessary to check that every application of an elementary cut elimination step makes either the cut disappearing,
in case of an axiom cut, or replaces it by some cuts, with a less measure.  In case of a cut involving structural rules, or
global connectives, the check is easy, since the result comes directly from Lemma \ref{cut-elim-lem}, and by the definition
of measure of a cut in $\ISC$ which is done in function of the measure of a cut in $\LJ$ (remember that the input proof
is clean, so there are not occurrences of ($\scriptsize \bf P$) and of $(\FUS)$).
So we will show only the cases of the elementary steps defined in Definition \ref{cut-step}. For readability, we will use the same
terminology. \\


$\bullet$. {\bf commutation step ($\scriptsize \sconj \bf R$)}.\\

Let $\tilde{\pi}=\{ \pi_{0}:\Gamma_{0}\vdash \gamma_{0}\} \cup \{\pi_{i}: \Gamma_{i}\vdash \gamma_{i}\mid 1\leq i \leq m \}$,
$\tilde{\pi'}=\{ \pi^{1}_{0}:\Delta_{0},\gamma_{0}\vdash \alpha_{0}, \pi^{2}_{0}:\Delta_{0},\gamma_{0}\vdash \beta_{0}\} 
\cup \{\pi'_{i}: \Delta_{i}, \gamma_{i} \vdash \delta_{i }\mid 1\leq i \leq m \}$ and 
$\tilde{\pi''}=\{\pi'_{0}:\infer[(\sconj R)]{\Delta_{0},\gamma_{0}\vdash \alpha_{0}\sconj \beta_{0}}
{\pi^{1}_{0} & \pi^{2}_{0}} \} 
\cup \{\pi'_{i}: \Delta_{i}, \gamma_{i} \vdash \delta_{i }\mid 1\leq i \leq m \} $.
Moreover let $h_{0},h^{1}_{0},h^{2}_{0},h_{i},h'_{i}$ be respectively the heights of $\pi_{0},\pi^{1}_{0}, \pi^{2}_{0},\pi_{i}, \pi'_{i}$.

The measure of the cut is :
$m=\{(|\gamma_{0}|, h_{0 }+h^{1}_{0}+ h^{2}_{0} +1)\} \cup \{|\gamma_{i}|, h_{i}+h'_{i})\mid 1\leq i \leq m\}$.

The measure of the new generated cut is:
 $m'=\{(|\gamma_{0}|, h_{0 }+h^{1}_{0} ), (|\gamma_{0}|, h_{0 }+h^{2}_{0} )\} \cup \{|\gamma_{i}|, h_{i}+h'_{i})\mid 1\leq i \leq m\}$,
and $m' < m$, by Definition \ref{cut-measure}, remembering that $\tilde{\pi'''}=clean(\tilde{\pi'''})$ and 
since the $clean$ step does not modify the height of $\pi_{0}$
and $\pi_{i}$, by Lemma \ref{clean-measure}. \\

$\bullet$ {\bf Conversion step: symmetric $\sconj$ rule}.\\

Let $\tilde{\pi}= \{\pi^{1}_{0}:\Gamma_{0}\vdash \alpha_{0}, \pi^{2}_{0}:\Gamma_{0}\vdash \beta_{0}\}
\cup \{\pi_{i}: \Gamma_{i}\vdash \alpha_{i}\mid 1\leq i \leq m\}$ and
$\{\tilde{\pi'}=\pi_{0}^{3}:\Delta_{0},\alpha_{0}\vdash \gamma_{0}\}\cup \{\pi'_{i}:\Delta_{i},\alpha_{i}
\vdash \gamma_{i}\mid 1\leq i \leq m\}$.
Moreover let $h^{1}_{0},h^{2}_{0},h^{3}_{0},h_{i},h'_{i}$ be respectively the heights of 
$\pi_{0},\pi^{1}_{0}, \pi^{2}_{0},\pi^{3}_{0},\pi_{i}, \pi'_{i}$.

The measure of the cut is:
$m=\{(|\alpha_{0}\sconj \beta_{0}|,h^{1}_{0}+h^{2}_{0}+h^{3}_{0}+2 )\}\cup \{(|\alpha_{i}|,h_{i}+h'_{i} )\mid 1\leq i \leq m\}$.
The measure of the new generared cut is:
$m'=\{(|\alpha_{0}|,h^{1}_{0}+h^{3}_{0} )\}\cup \{(|\alpha_{i}|,h_{i}+h'_{i} )\mid 1\leq i \leq m\}$.
In computing $m'$, we used Lemma \ref{clean-measure}, which assures that the heigh of $\pi^{1}_{0}$
has been not modified by the $clean$ step.\\

$\bullet$ {\bf Conversion step: asymmetric $\sconj$ rule}\\

Let $\tilde{\pi_{0}}= \{\pi^{1}_{0}:\Gamma\vdash \alpha, \pi^{2}_{0}:\Gamma\vdash \beta, \pi^{3}_{0}:\Gamma' \vdash \mu \sconj \nu\}
\cup \{\pi_{i}: \Gamma_{i}\vdash \sigma_{i}\mid i\in I\}$ and
$\{\tilde{\pi'_{0}}=\pi_{0}^{4}:\Delta,\alpha \sconj \beta \vdash \gamma, \pi_{0}^{5}: \Delta',\mu \vdash \rho \}
\cup \{\pi'_{i}:\Delta_{i},\sigma_{i}
\vdash \tau_{i}\mid i\in I\}$.
Moreover let $h^{1}_{0},h^{2}_{0},h^{3}_{0},h^{4}_{0},h^{5}_{0},h_{i},h'_{i}$ be respectively the heights of 
$\pi_{0},\pi^{1}_{0}, \pi^{2}_{0},\pi^{3}_{0},\pi^{4}_{0},\pi^{5}_{0}\pi_{i}, \pi'_{i}$.

The measure of the cut is:
$m=\{(|\alpha\sconj \beta |,h^{1}_{0}+h^{2}_{0}+h^{4}_{0}+1 )\}\cup \{( |\mu \sconj \nu |, h^{3}_{0} + h^{5}_{0}+1) \}
\cup \{(|\sigma_{i}|,h_{i}+h'_{i} )\}$.
The measure of the new generared cut is:
$m'=\{(|\alpha \sconj \beta |,h^{1}_{0}+h^{2}_{0}+h^{4}_{0}+1 )\}\cup \{( |\mu |, h^{3}_{0} + h^{5}_{0}-1) \}
\cup \{(|\sigma_{i}|,h_{i}+h'_{i} )\}$, since, by Lemma \ref{clean-measure}, $h(\pi_{s})=h(\pi)$. \\
\\
So a topmost cut can be eliminated in a finite number of steps.
Moreover Lemma \ref{clean-measure} assures us
that the cleaning step does not
increase the measure of any cut.
Since the number of cuts is finite, the algorithm eventually stops.
\end{proof}

\begin{cor}
$\ISC$ enjoys the cut elimination property.
\end{cor}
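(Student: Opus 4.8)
The final statement to prove is the corollary that $\ISC$ enjoys cut elimination. Since all the machinery — the elementary cut-elimination steps (Definition~\ref{cut-step}), the measure $m(\pi)$ on $\ISC$-cuts (Definition~\ref{cut-measure}), the algorithm $\mathcal{A}$ (Definition~\ref{alg}), and its termination (Lemma~\ref{termin}) — is already in place, the plan is essentially to assemble these pieces and handle the remaining book-keeping about $(\mathbf{P})$, $\FUS$, and canonicity.

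First I would reduce the general case to the case of a clean and canonical derivation. Given an arbitrary $\pi : \ISCD \mM$, apply the $clean$ function (Lemma~\ref{clean-lem}) to remove all $(\mathbf{P})$ and $\FUS$ applications, and then apply the canonicalization lemma to obtain a derivation that is simultaneously clean and canonical (one has to check that canonicalization can be performed so as not to reintroduce $(\mathbf{P})$ or $\FUS$, or else reapply $clean$ afterwards — these commute harmlessly). Then $\mathcal{A}$ takes this derivation as legitimate input.

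Second, I would invoke Lemma~\ref{termin}: $\mathcal{A}(\pi)$ terminates. Here the key point is that $\mathcal{A}$ is designed so that at each recursive call it picks a topmost cut (one closest to the premises), applies an elementary step, and then re-cleans; the termination argument rests on the multiset measure $m(\pi)$ decreasing — for global and structural cuts this follows from the $\LJ$ analysis (Lemma~\ref{cut-elim-lem}) applied in parallel over the atoms, using Lemma~\ref{clean-measure} to see that cleaning does not change the relevant heights, and for the three genuinely $\ISC$-specific local steps (commutation on $\sconj\mathbf{R}$, symmetric $\sconj$ conversion, asymmetric $\sconj$ conversion) the explicit measure computations in the proof of Lemma~\ref{termin} do the work. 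The output $\mathcal{A}(\pi)$ contains no cut by the definition of $\mathcal{A}$ (the recursion bottoms out precisely when no cut remains), and it proves the same molecule $\mM$, since every elementary cut-elimination step and the $clean$ operation preserve the conclusion.

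The main obstacle — and the only real subtlety — is making sure the interface between the three reduction hypotheses (clean, canonical, and the multicut form of the cut rule) is coherent: a cut-elimination step preserves canonicity but may destroy cleanness, which is why $\mathcal{A}$ re-cleans after each step; and one must confirm, via Lemma~\ref{clean-measure}, that re-cleaning never increases the measure of any surviving cut, so the termination ordering is not disturbed. Once this is checked, the corollary follows: given any $\pi$ proving $\mM$, passing to a clean canonical form and running $\mathcal{A}$ yields a cut-free derivation of $\mM$, so the (multi)cut rule — hence, by the equivalence of the multicut and cut formulations noted at the start of Section~\ref{el-step}, the original cut rule — is admissible in $\ISC$.
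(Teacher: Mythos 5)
Your proof is correct and follows exactly the route the paper intends: the corollary is stated without an explicit proof precisely because it is the immediate assembly of Lemma~\ref{clean-lem}, the canonicalization lemma, and the termination of the algorithm $\mathcal{A}$ (Lemma~\ref{termin}), together with the equivalence of cut and multicut. Your additional care about re-cleaning after each step and checking that cleaning does not disturb the measure matches what the paper itself does in Definition~\ref{alg} and Lemma~\ref{clean-measure}.
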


\section{Intersection types from a logical point of view: an overview  }\label{Other}
The problem raised by Hindley, of looking for a logical system naturally connected to intersection type assignment through the Curry-Howard
isomorphism, has generated many different proposals.
Very roughly speaking, we could divide them in two categories,
that we call the {\em semantic approach} and the {\em logical approach} to the problem.
The semantic approach is characterized by the fact that an extension of the system we called $\IT$ in this paper has been considered.
Namely intersection type assignment comes with
a subtyping relation, which formalizes the semantics of intersection as the meet in the continuous function space.  This subtyping relation is the
essential tool in using intersection types for modelling denotational semantics of $\lambda$-calculus, and so we call this approach semantic.
The key idea here is to avoid the rule introducing the intersection, which has not a logical explication. The first result in this line is by
Venneri \cite{venneri94}, who designed an intersection type assignment system for Combinatory Logic, in Hilbert style, where
the introduction of
intersection is replaced by an infinite set of axioms for the basic combinators, built from their principal types. Then the subtyping
rule plays the role
of the intersection introduction. In \cite{venneri97} this result has been enhanced, both by extending it to union types, and by giving a logical
interpretation of the subtyping relation, which turned out to correspond to the implication in minimal relevant logic.
The connection between intersection types and relevant implication has been already noticed in \cite{barba94}.

In the logical approach the system shown $\IT$ shown in Fig. \ref{ITrules} is taken into consideration, without any subtyping relation.
The aim is to design a true deductive system, such that its decoration coincide with $\IT$.
In this line Capitani, Loreti and Venneri \cite{venneri01} propose a system of hypersequents, i.e., sequences
of formulae of $\LJ$, where the distinction between global and local connectives has been already introduced. The relation between this
system and $\LJ$ cannot be formally stated, since the notion of empty formula in an hypersequent is essential, while it has no correspondence
in $\LJ$. The relation between $\IT$ and $\LJ$ has been clarified by Ronchi Della Rocca and Roversi, who designed $\IL$ \cite{il},
a deductive system where formulae are tree of formulae of $\LJ$, proved by isomorphic proofs. The result has been further enhanced
by Pimentel, Ronchi Della Rocca and Roversi \cite{isl,islrev}, who defined $\ISL$ and proved that the intersection born from a splitting of the usual
intuitionistic conjunction into two connectives, each one reflecting part of its behaviour, local or global. The system $\ISL$ is extensively
discussed in this paper.

A problem strongly related to the considered one is the design of a language, explicitly
typed by intersection types.
Different proposals have been made.
In \cite{ronchi02}, a language with this property has been obtained as side effect of the logical approach, by a full decoration of the
intersection logic $\IL$. But its syntax is difficult, since the syncronous behaviour of intersection types is reflected
in the fact that a term typed by an intersection type is a pair of terms which are identical, modulo type erasure.
A similar language has been proposed in \cite{wells02a}. All the other attempts have been made with the aim of avoiding
such a duplication of terms.
Wells and Haack \cite{wells02} build a language where the duplication becomes dynamic,
by enriching the syntax and by defining an operation of type selection both on terms and types.
Liquori and Ronchi Della Rocca \cite{liquori07} proposed a language which has an imperative flavour, since terms are decorated
by locations, which in their turn contain intersection of simply typed terms, describing the corresponding type derivation.
The last proposal is by Bono, Bettini and Venneri \cite{venneri08}, and consists in a language with parallel features,
where parallel subterms share the same free variables, i.e., the same resources. Since we can see a connection between the
global behaviour of the arrow type and a parallel behaviour of terms, we think it would be interesting to explore if there
is a formal connection between this language and $\ISL$.

\bibliographystyle{eptcs} 




\end{document}